\renewcommand{\emph}[1]{\textit{#1}}
\newtheorem{theorem}{Theorem}
\newtheorem{lemma}{Lemma}
\newtheorem{definition}{Definition}[section]
\newif\ifdiff
\newcommand{\revise}[1]{\textcolor{blue}{#1}}
\newcommand{\delete}[1]{\textcolor{red}{\sout{#1}}}
\newcommand{\replace}[2]{\textcolor{red}{\sout{#1}}\textcolor{blue}{#2}}
\newcommand{\startreviseblock}{\color{blue}}
\newcommand{\dismissreviseblock}{\color{black}}
\newcommand{\shi}[1]{\textcolor{cyan}{[shi: #1]}}
\newcommand{\xz}[1]{\textcolor{purple}{[xz: #1]}}
\newcommand{\remark}[1]{\textcolor{orange}{[remark: #1]}}
\newcommand{\revise}[1]{{{#1}}}
\newcommand{\delete}[1]{}
\newcommand{\replace}[2]{#2}
\newcommand{\startreviseblock}{}
\newcommand{\dismissreviseblock}{}
\newcommand{\shi}[1]{}
\newcommand{\xz}[1]{}
\newcommand{\remark}[1]{}
\newif\ifshowminordiff
\newcommand{\mrevise}[1]{\revise{#1}}
\newcommand{\mdelete}[1]{\delete{#1}}
\newcommand{\mreplace}[2]{\textcolor{red}{\sout{#1}}\textcolor{blue}{#2}}
\newcommand{\mremark}[1]{\remark{#1}}
\newcommand{\mstartreviseblock}{\startreviseblock}
\newcommand{\mdismissreviseblock}{\dismissreviseblock}
\newcommand{\mrevise}[1]{#1}
\newcommand{\mdelete}[1]{}
\newcommand{\mreplace}[2]{#2}
\newcommand{\mremark}[1]{}
\newcommand{\mstartreviseblock}{}
\newcommand{\mdismissreviseblock}{}
\newcommand{\usenixtitle}[1]{\title{\Large \bf #1}}
\newcommand{\tool}{{StateLifter}}
\newcommand{\proteus}{{Proteus}}
\newcommand{\autoformat}{{AutoFormat}}
\newcommand{\tupni}{{Tupni}}
\newcommand{\reverx}{{ReverX}}
\newcommand{\nemesys}{{NemeSys}}
\newcommand{\defparbf}[1]{\smallskip\noindent\textbf{#1}}
\newcommand{\defparit}[1]{\smallskip\noindent\textit{\underline{\smash{#1}}}}
\newcommand{\bS}{\mathbb{S}}
\newcommand{\bE}{\mathbb{E}}
\newcommand{\bF}{\mathbb{F}}
\newcommand{\bI}{\mathbb{I}}
\newcommand{\bR}{\mathbb{R}}
\newcommand{\bM}{\mathbb{M}}
\newcommand{\bV}{\mathbb{V}}
\newcommand{\qcolon}{\mbox{`:'}}
\newcommand{\qcircum}{\mbox{`\textasciicircum'}}
\newcommand{\qa}{\mbox{`a'}}
\newcommand{\qb}{\mbox{`b'}}
\newcommand{\qc}{\mbox{`c'}}
\newcommand{\qz}{\mbox{`z'}}
\begin{document}
    
\date{}

\usenixtitle{Extracting Protocol Format as State Machine via Controlled Static Loop Analysis}

\author{
    {\rm Qingkai Shi}\\
    Purdue University
    \and
    {\rm Xiangzhe Xu}\\
    Purdue University
    \and
    {\rm Xiangyu Zhang}\\
    Purdue University
} 

\maketitle

\begin{abstract}

Reverse engineering of protocol message formats is critical for many security applications.
Mainstream techniques \mdelete{in this field focus on}\mrevise{use} dynamic analysis and inherit its low-coverage problem --- 
the inferred message formats only reflect the features of their inputs.
To achieve high coverage,
we choose to use static analysis to infer message formats from the implementation of protocol parsers.
In this work,
we focus on a class of extremely challenging protocols whose formats are described via 
\revise{constraint-enhanced} regular expressions and parsed using finite state machines.
Such state machines are often implemented as complicated parsing loops,
which are inherently difficult to analyze via conventional static analysis.
Our new technique extracts a state machine \mdelete{from a parsing loop }by regarding 
each loop iteration as a state and the dependency between loop iterations as state transitions.
To achieve high, i.e., path-sensitive, precision but avoid path explosion,
the analysis is controlled to merge as many paths as possible based on carefully-designed rules.
The evaluation results show that 
we can infer a state machine and, thus, the message formats,
in five minutes with over 90\% precision and recall, far better
than the state of the art\mdelete{s}.
We also applied the state machines to enhance protocol fuzzers,
which are improved by 20\% to 230\% in terms of coverage\mreplace{.
We found twelve zero-days with the guidance of inferred state machines. 
The baseline without guidance only finds two.
We also provide case studies of applying our approach to other domains beyond network protocols.}{ and detect ten more zero-days compared to baselines.}
\end{abstract}

\section{Introduction}\label{sec:introduction}

In the era of the internet of things,
any vulnerability in network protocols may lead to devastating consequences for countless devices that are inter-connected and spread worldwide.
For instance, in 2020,
a protocol vulnerability led to the largest ever DDoS attack that targeted Amazon Web Service, affecting millions of active users~\cite{aws2020ddos}.
To ensure protocol security by automated analyses including fuzzing~\cite{gascon2015pulsar,comparetti2009prospex}, model checking~\cite{musuvathi2004model,bishop2005rigorous}, verification~\cite{bishop2006engineering}, and many others,
a key prerequisite is to acquire a formal specification of the message formats.
However,
this is a hard challenge.

There have been many works on automatically inferring the formats of network messages~\cite{narayan2015survey,sija2018survey,duchene2018state,li2011survey}.
However, almost all existing works are in a fashion of dynamic analysis --- either network trace analysis~\cite{cui2007discoverer,kleber2018nemesys,kleber2020message,ye2021netplier,wang2011biprominer,wang2012semantics,luo2013position} or dynamic program analysis~\cite{gopinath2020mining,hoschele2016mining,caballero2009dispatcher,lin2008automatic,cui2008tupni,caballero2007polyglot,wang2009reformat,lin2010reverse,liu2013extracting}.
The former captures online network traces and 
uses statistical methods including machine learning to cluster the traces into different categories and then perform message alignment and field identification.
The latter runs the captured network traces against the protocol implementation
and leverages the runtime control or data flows to infer message formats.
Despite being useful in many applications,
as dynamic analyses,
they cannot infer message formats not captured by the input network traces.
For instance,
a recent work reported a highly precise technique but with
coverage lower than 0.1~\cite{ye2021netplier}.
This means that it may miss \mdelete{important }message formats \mreplace{and degrade the performance of}{that are important for} downstream security analysis.

To infer message formats with high coverage,
\mdelete{in this work,}
we use static analysis, which does not rely on any input network traces but can thoroughly analyze a protocol parser.
We target open protocols that have publicly available source code.
While these protocols often have available specifications,
they are usually documented in a natural language that is not machine-readable and contains inconsistencies, ambiguities, and even vulnerabilities~\cite{mcquistin2020parsing}.
Hence, inferring formal specifications for open protocols deserve dedicated studies.
Particularly, we target a category of extremely challenging protocols, namely \textit{regular protocols},
\revise{which have two main features.
    First, the format of a regular protocol can be specified by a constraint-enhanced regular expression (ce-regex),
    such as $(a|b)^{+}c$ where $a$, $b$, and $c$ are respectively one-, two-, and four-byte variables satisfying the constraints $a~\mbox{{mod}}~10 = 4$, $b > 3$, and $(c \gg 16) + c > 100$.
    Compared to a common regular expression (com-regex), the constraints in a ce-regex allow us to specify rich semantics in a network protocol.
    Note that a com-regex can be regarded as a simple instance of ce-regex.
    For instance, a com-regex $(a|b)^{+}c$ can be viewed as a ce-regex with the constraints
    $a = \qa$, $b = \qb$, and $c = \qc$.
    Second, the messages of a regular protocol are parsed via a finite state machine. This is common in performance-sensitive and embedded systems for the benefit of low latency~\cite{graham2014finite}. 
    That is, with a state machine, we can parse a protocol without waiting for the entire message --- whenever receiving a byte, we parse it and record the current state; the recorded state allows us to continue parsing once we receive the next byte.}
\mdelete{whose formats can be described as regular expressions
and are parsed via finite state machines for the benefit of low latency.
There are \textit{three key differences} between our approach and the state of the arts.
First,
we do not assume the availability of network traces which, however, are required by existing works as their inputs
but could be hard to obtain in many cases.
Hence, our approach could be a promising alternative especially when high-quality network traces are not available.
Second,
we provide a different perspective to understand the message formats in protocol reverse engineering.
Existing works understand message formats by segmenting a message into multiple fields while we understand message formats via its parsing state machine.
Such state machines allow us to specify message formats with both high precision and high coverage while existing dynamic-analysis-based approaches often have the coverage problem and, as will be illustrated in \S\ref{sec:motivating_example}, they are not effective when dealing with state-machine-based parsers.
Third,
our work is also different from many previous works that infer the system state machines such as the one describing TCP's handshake mechanism. In this work,
the state machine is an equivalent representation of the message formats.}\mremark{we move this part to the end of this section}

It is inherently challenging for static program analysis to infer the formats of a regular protocol from its parser.
This is because a state machine for parsing is often implemented as a multi-path loop\footnote{A single-path loop contains only a single path in its loop body. A multi-path loop contains multiple paths in its loop body.} that involves complex path interleaving that mimics the state transitions,
but conventional static analysis --- loop unwinding, loop invariant inference, and loop summarization --- cannot handle such loops well.
First,
loop unwinding unrolls a loop with a limited number of iterations and, hence, will miss program behaviors beyond the unrolling times.
Second,
loop invariant techniques compute properties that always hold in each loop iteration.
They rely on abstract interpretation for fixed point computation and, to ensure the termination, use the widening operators that often lead to significant loss of precision~\cite{mine2006octagon,kroening2013loop,ancourt2010modular,gopan2006lookahead,gopan2007guided,gupta2009invgen,sharma2011simplifying,jeannet2014abstract,nguyen2014using}.
Third,
loop summarization techniques precisely infer the input and output relations of a loop
by induction.
They are good at handling single-path loops~\cite{godefroid2011automatic,saxena2009loop} or some simple multi-path loops~\cite{strejvcek2012abstracting,xie2015slooper}.
When used to analyze a multi-path loop that implements a state machine,
they either fail to work or have to enumerate all paths in the loop body~\cite{xie2016proteus,xie2019automatic}, thus suffering from path explosion.
The path explosion problem not only significantly slows down the static analysis
but also leads to the explosion of states and state transitions, making the output state machine not operable.

\mdelete{Our static analysis extracts a state machine from a parsing loop by regarding}\mrevise{To infer state machines from a parsing loop, our static analysis regards} each loop iteration as a state and the dependency between loop iterations as state transitions.
It mitigates the path explosion problem with the key insight that
a state machine can be significantly compressed by merging states and state transitions.
For instance, both state machines in Figure~\ref{fig:insight} represent the com-regex $(a|b)^{+}c$,
but the one in Figure~\ref{fig:insight}(b) is notably compressed.
This observation guides us to design a static analysis that merges as many program paths as possible when analyzing an iteration of the parsing loop,
producing a super state for the merged program paths, e.g., the state $F$,  instead of many small states for individual program paths, e.g., the states $B$ and $C$.
As a result, our analysis notably alleviates the path explosion problem and infers highly compressed state machines, e.g., Figure~\ref{fig:insight}(b), even from the implementation of complex state machines, e.g., Figure~\ref{fig:insight}(a).
As for state transitions,
we record the pre- and post-condition of each loop iteration.
These conditions
allow us to compute~the dependency between two consecutive loop iterations
and are regarded as state-transition constraints.
As a whole,
an inferred state machine represents the message formats and can drive many security analyses.

\begin{figure}[t]
    \centering
    \includegraphics[width=0.99\columnwidth]{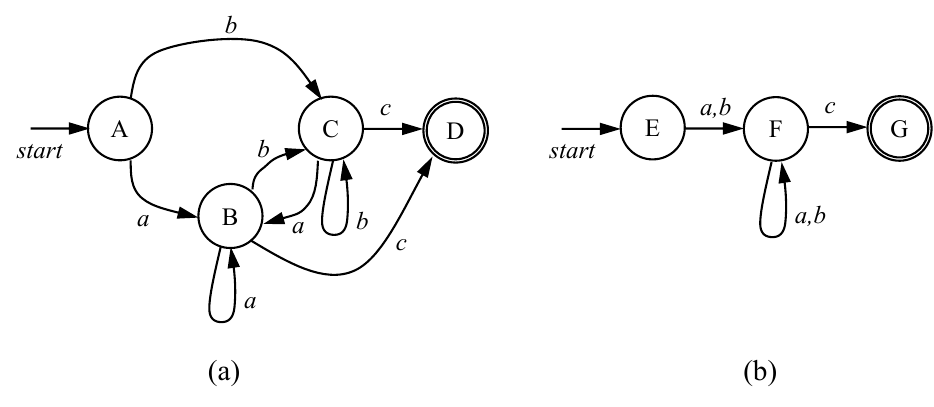}
    \caption{Example to illustrate the insight of our approach.}\label{fig:insight}
\end{figure}

\mrevise{There are three key differences between our approach and the state of the art.
First,
we do not assume the availability of network traces which, however, are required by existing works
but could be hard to obtain~\cite{narayan2015survey}.
Hence, our approach could be a promising alternative when high-quality network traces are not available.
Second,
different from many existing works that understand message formats by segmenting a message into multiple fields,
we understand message formats via the parsing state machine.
Such state machines allow us to specify message formats with both high precision and high coverage and, as will be illustrated in \S\ref{sec:motivating_example}, they are not effective when dealing with state-machine-based parsers, thus exhibiting low precision and recall.
Third,
our work is also different from many previous works~\cite{comparetti2009prospex,ye2021netplier,leita2005scriptgen,cui2006protocol,shevertalov2007reverse,antunes2011reverse,wang2011inferring,cho2010inference,zhang2012mining,laroche2012network,mcmahon2022closer,joeri2015protocol} that infer system state machines such as the one describing TCP's handshake mechanism. In this work,
state machines are used to specify message formats.}
In summary, we make \mrevise{the} following contributions.
\begin{itemize}
    \item We developed a novel static analysis that mitigates the path-explosion problem in conventional approaches and can infer highly compressed state machines from code.
    
    \item We applied the static analysis to reverse engineering message formats. The analysis is highly precise and fast with high coverage. 
    To the best of our knowledge, this is the first static analysis that formulates the problem of message format inference as extracting state machines.
    
    \item 
    We implemented our approach, namely \tool, and evaluated it on ten protocols from different domains.
    \tool\ is highly efficient as it can infer a parsing state machine or, equivalently, the message formats in five minutes.
    \tool\ is also highly precise with a high recall as its inferred state machine can uncover $\ge 90\%$ protocol formats with $\le 10\%$
    false ones. 
    By contrast, the baselines often miss $\ge 50\%$ of possible formats and may produce $\ge 40\%$ false ones.
    We use the inferred finite state machines to improve two state-of-the-art protocol fuzzers.
    The results demonstrate that, with the inferred state machines, the fuzzers can be improved by 20\% to 230\% in terms of coverage.
    We have discovered 12 zero-day vulnerabilities but the baseline fuzzers only find two of them. 
\end{itemize}

\newpage

\section{Problem Scope}\label{sec:problem_scope}

\mreplace{In this paper, w}{W}e target regular protocols,
of which (1) the message formats can be described as \revise{constraint-enhanced} regular expressions and (2) the messages are parsed via finite state machines (FSM).
\mrevise{Formally, 
    considering the equivalence of regular expression and FSM,
    we define a regular protocol in Definition~\ref{def} as an FSM enhanced by first-order logic constraints.}
The problem we address is to infer the \mreplace{parsing state machine and, hence, the message format, }{FSM} from the parser of a regular protocol.
\mdelete{A finite state machine (FSM) for parsing}\mrevise{An FSM} can be either deterministic or \mreplace{non-deterministic}{not}.
Since \mdelete{a deterministic FSM is just a special form of the non-deterministic FSM
and }any non-deterministic FSM can be converted to a deterministic \mreplace{FSM}{one},
for simplicity, FSM means non-deterministic FSM by default in this paper. Note that a non-deterministic FSM may contain multiple start states and a state may transition to multiple successor states with the same inputs.
\begin{definition}\label{def}
    An FSM is a quintuple $(\Sigma, \bS, \bS_0, \bF, \delta)$ where
    \begin{itemize}
        \item $\Sigma$ is a set of first-order logic constraints over a byte sequence $\sigma^n$ of length $n$. We use $\sigma^n_i$ and $\sigma^n_{i..j}$ to represent the $i+1$th byte and a subsequence of $\sigma^n$, respectively.
        A typical constraint could be $\sigma^2_1\sigma^2_0 > 10$, which means that the value of a two-byte integer with $\sigma^2_1$ the most significant byte and $\sigma^2_0$ the least is larger than ten. We simply write $\sigma$ as a shorthand of $\sigma^1_0$ and $\sigma^1$.
        \item $\bS$ is a non-empty set of states; $\bS_0\subseteq \bS$ is a non-empty set of start states; $\bF\subseteq \bS$ is a non-empty set of final states.
        \item $\delta: \bS\times \Sigma \mapsto 2^\bS$ is the transition function, meaning that when obtaining a byte sequence satisfying a constraint at a state, we will proceed to some possible states.
    \end{itemize}
\end{definition}
By definition,
a sequence of transitions from a start state to a final state defines a possible message format.
For instance, $\delta(A\in\bS_0, \sigma^2_1\sigma^2_0 > 10) = \{ B \}$ and $\delta(B, \sigma = 5) = \{ C\in \bF \}$
are two transitions ---
one from a start state $A$ to the state $B$ with the constraint $\sigma^2_1\sigma^2_0 > 10$
and the other from the state $B$ to a final state $C$ with the constraint $\sigma = 5$.
It implies a message format where the first two bytes satisfy $\sigma^2_1\sigma^2_0 > 10$ and the third byte must be 5.
\mreplace{The inferred }{Such an }FSM \mdelete{is equivalent to the format of a regular protocol
and }allows us to generate valid messages following the state-transition constraints.
\mdelete{Our problem scope is different from existing works in the following aspects.}

\mdelete{From the perspective of inferring message formats, existing works are dynamic analysis and aim to segment ``given network messages'' into multiple fields. 
    As acknowledged by these techniques, they only infer the formats captured by the input messages and, thus, may miss important formats.
    By contrast, we statically infer the parsing FSM,
    which aims to cover all possible formats with high precision and enable us to generate network messages with high coverage. The generated messages can drive existing works for message segmentation, field inference, and others we may not support.}
    
\mdelete{From the perspective of inferring state machines, existing works are dynamic analyses and rely on ``given network messages'' to discover system state transitions such as from a state that waits for a message to a state that has received a message. In this paper,
    we do not infer such system state machines but FSMs that represent message formats.
    Since we can generate network messages with high coverage based on the inferred FSM, after input generation, 
    we can leverage existing works to infer the system state machine.}\mremark{we remove this part because it has been discussed in the last paragraph of \S\ref{sec:introduction}. We remove it to save space.}

\smallskip
\noindent
\revise{\textbf{Why Regular Protocols?}
        In practice, the formats of a wide range of network protocols, such as HTTP and UDP, can be specified via ce-regex.
        This is acknowledged by many existing works, such as LeapFrog~\cite{doenges2022leapfrog} that verifies protocol equivalence via FSMs,
        and P4~\cite{bosshart2014p4}, a domain-specific language developed by the open networking foundation, which allows us to specify protocols via FSMs. As an example, we can specify an HTTP request using the following ce-regex:\\
        \texttt{\small Method Space URI Space Version CRLF ((General-Header | Request-Header | Entity-Header) CRLF)* CRLF Body?}, \\
        where each field, e.g., Method, satisfies certain constraints such as Method = `Get' $\lor$ Method = `Post' $\lor\cdots$.}

\revise{While a protocol that can be specified by ce-regex is unnecessary to be parsed via FSMs, an FSM parser can greatly improve the performance.
    Graham and Johnson~\cite{graham2014finite} reported that an FSM parser can achieve over an order of magnitude performance improvement, and a hand-written FSM parser could scale better than widely-used implementations such as the Nginx and Apache web servers. The key factor contributing to this improvement is that an FSM parser can parse each byte of a network message as soon as the byte is received, without having to wait for the entire message.
    As an illustration, consider the FSM parser in Figure~\ref{fig:motivating-example}(a) that parses $(a|b)^{*}c$. Each iteration of the parser processes one byte received by the function \textit{read\_next\_msg\_byte()}. The parser's state, tracked by the variable \textit{state}, allows it to continue parsing once the next byte is received. Hence, we can perform important business logic, such as preparing responses and updating system status, before a full message is received.}

\revise{Due to this performance merit, regular protocols are frequently utilized in performance-critical systems, particularly in embedded systems that cannot tolerate latency. Typical examples include Mavlink~\cite{mavlink} and MQTT~\cite{mqtt-official}, both of which are well-established in their respective fields.
    Mavlink is a standard messaging protocol for communicating with unmanned vehicles and is used in popular robotic systems such as Ardupilot~\cite{ardupilot} and PX4~\cite{px4}.
    MQTT, on the other hand, is a standard messaging protocol for the internet of things and is employed across various industries, such as automotive, manufacturing, and telecommunications, to name a few. 
    In our evaluation, we include ten regular protocols from different embedded systems and designed for edge computing, musical devices, amateur radio, and many others.}

\section{\mreplace{Motivating Example}{Limitation of Existing Works}}\label{sec:motivating_example}

\begin{figure*}
    \centering
    \includegraphics[width=0.99\textwidth]{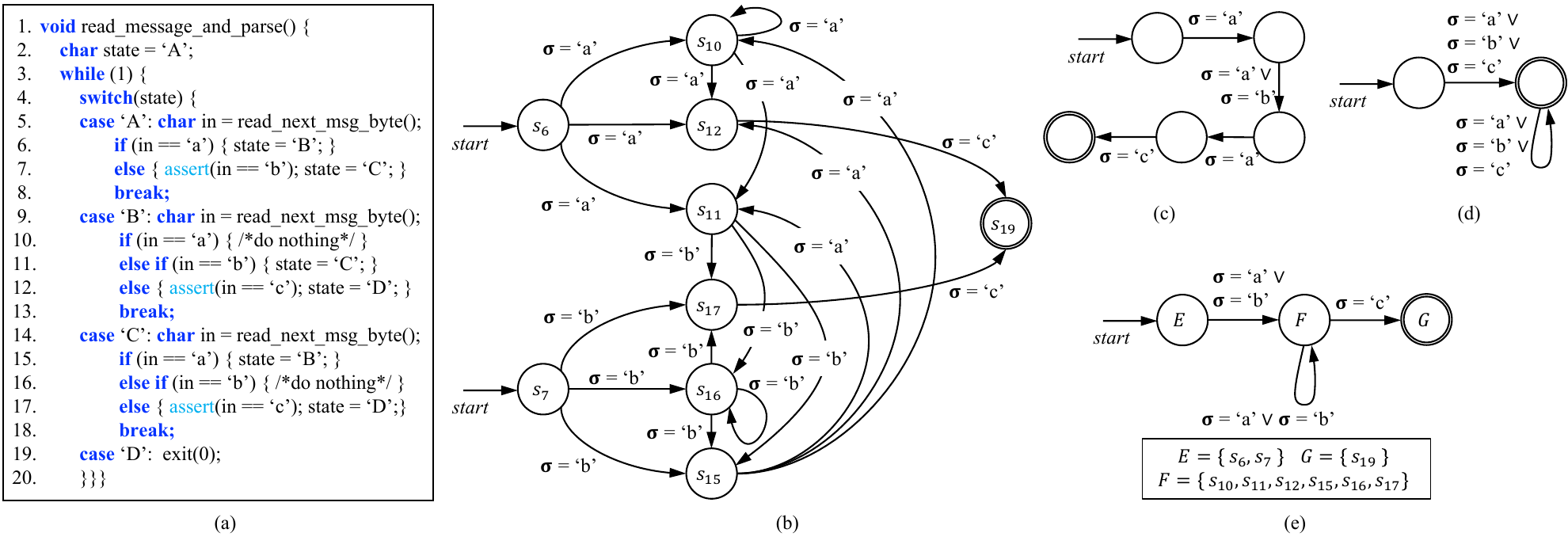}
    \ifdiff
    \else
    \vspace{-3mm}
    \fi
    \caption{(a) Implementation of the FSM in Figure~\ref{fig:insight}(a). (b) The FSM inferred by the state-of-the-art static analysis\revise{, i.e., \proteus}. (c) \revise{The FSM that represents the message format inferred by \autoformat. (d) The FSM that represents the message format inferred by \tupni. (e)} The FSM inferred by our approach, which is exactly the same as the compressed FSM in Figure~\ref{fig:insight}(b).}
    \vspace{-3mm}
    \label{fig:motivating-example}
\end{figure*}

\mdelete{Figure~\ref{fig:motivating-example} shows the implementation of the FSM in Figure~\ref{fig:insight}(a).
Let us use this example to show the problem of existing works.}

\noindent
\textbf{Network Protocol Reverse Engineering.}
Conventional techniques for inferring message formats \mreplace{of a network protocol
include}{are either} network trace analysis
\mreplace{and}{or} dynamic program analysis.
\mreplace{Given a set of messages, they focus on segmenting a message into multiple fields
and only capture the features in the input messages.
Thus, it is hard for them to infer a regular expression or an FSM to represent the message format.}{They only capture the features in a set of input messages and cannot effectively infer message formats for regular protocols.}

\defparit{(1) Network Trace Analysis (NTA).}
NTA does not analyze the implementation of protocols~\cite{cui2007discoverer,kleber2018nemesys,kleber2020message,ye2021netplier,wang2011biprominer,wang2012semantics,luo2013position,antunes2011reverse}.
Given a set of messages, they use statistical methods \replace{to cluster the messages into different categories and then perform message alignment and field identification. Hence, t}{including machine learning to identify fields in a message or infer an FSM to represent message formats. T}he formats inferred by them strongly depend on the shape of input messages.
For instance, assume that a valid message format satisfies the regular expression $(a|b)^{+}c$,
meaning that a message can start with any combination of `$a$' and `$b$'.
If all messages input to \revise{a typical} NTA\revise{, such as \reverx~\cite{antunes2011reverse} and \nemesys~\cite{kleber2018nemesys,kleber2020message},} start with `$aaa$', 
it is very likely \mdelete{for NTA }to infer an incorrect format \mreplace{that starts}{starting} with `$aaa$'\mdelete{, far from the correct format $(a|b)^{+}c$}.
\revise{In more complex cases where the format is a ce-regex, NTA cannot precisely infer  constraints in the ce-regex, e.g., $a~\mbox{{mod}}~10 = 4$, $b > 3$, and $(c \gg 16) + c > 100$. This motivates us to use program analysis so that we can precisely infer the constraints by tracking path conditions.}

\ifdiff
\else
\newpage
\fi

\defparit{(2) Dynamic Program Analysis (DPA).}
DPA is more precise than NTA as it tracks data flows in protocols' implementation~\cite{gopinath2020mining,hoschele2016mining,caballero2009dispatcher,lin2008automatic,cui2008tupni,caballero2007polyglot,wang2009reformat,lin2010reverse,liu2013extracting}.
However, it shares the same limitation with NTA as the inferred formats also only capture the features of input messages.
Typically,
\revise{techniques like \autoformat~\cite{lin2008automatic} 
infer neither repetitive fields nor field constraints.
For instance, 
given a set of messages,
e.g., $\{$ `aaac', `abac', ...  $\}$, which satisfy the ce-regex $(a|b)^{+}c$ where $a = \qa$, $b = \qb$, and $c \ge \qc$,
while \autoformat\ will run these messages against the protocol's implementation,
it does not extract conditions like $c \ge \qc$ from the code and may produce a com-regex $a(a|b)ac$ as the format.
    The FSM of the com-regex is shown in Figure~\ref{fig:motivating-example}(c),
    which is not correct as it cannot parse messages with repetitive fields and
    the last transition is not labeled by the correct constraint $\sigma\ge\qc$ and, thus, is considered to be a false transition.}

\revise{Compared to \autoformat, }\tupni~\cite{cui2008tupni} handles parsing loops with the assumption that loops are used to parse repetitive fields in a network message.
However,
this is not true for regular protocols\mrevise{. For example, Figure~\ref{fig:motivating-example} shows the implementation of the FSM in Figure~\ref{fig:insight}(a).}
\mreplace{as we}{We} can observe that \mreplace{an FSM parsing}{the} loop\mdelete{, e.g., the code in Figure~\ref{fig:motivating-example},} parses all fields in a message, no matter a field is repetitive, e.g., $a$ and $b$, or just a single byte, e.g., $c$.
Hence, \mdelete{for a given message like `$aabaaabbaaac$', }\tupni\ will produce a format like $(a|b|c)^{+}$ as the byte $c$ is also handled in the loop and regarded as a repetitive field.
\revise{Figure~\ref{fig:motivating-example}(d) shows the corresponding FSM,
    which does not represent a correct format. For example, in the inferred FSM, the incoming transitions of the final state may have the constraint $\sigma = \qa$, but in the correct FSM shown in Figure~\ref{fig:insight},
    the incoming transitions of the final state are only constrained by $\sigma = \qc$.}

\defparbf{Static Loop Analysis.}
Unlike NTA and DPA which only~capture formats in their input messages,
we propose to use static analysis to infer all possible formats in the form of FSM.
However, we fail to find any practical static analysis that can infer such formats
with high precision, recall, and speed.

\defparit{(1) Loop Unwinding and Loop Invariant.}
Loop unwinding limits the number of loop iterations to a constant $k$~\cite{xie2005scalable,shi2018pinpoint,shi2021path,babic2008calysto}.
When analyzing the parser in Figure~\ref{fig:motivating-example}(a),
it will only produce the formats of the first $k$ bytes as each iteration analyzes one byte.
Loop invariant techniques~\cite{mine2006octagon,kroening2013loop,ancourt2010modular,gopan2006lookahead,gopan2007guided,gupta2009invgen,sharma2011simplifying,jeannet2014abstract,nguyen2014using} do not infer FSMs, either. 
They compute constraints that always hold after every loop iteration.
For instance, a possible invariant of the loop in Figure~\ref{fig:motivating-example}(a) could be
$\qa < \textit{in} < \qc$. This is far from our goal of FSM inference.

\begin{figure*}
    \centering
    \includegraphics[width=0.98\textwidth]{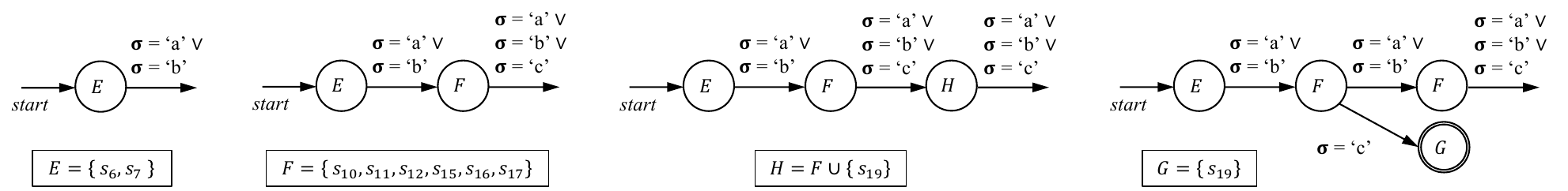}
    \vspace{-2mm}
    \caption{Basic steps of our approach.}
    \ifdiff
    \else
    \vspace{-3mm}
    \fi
    \label{fig:overview-steps}
\end{figure*}

\defparit{(2) Loop Summarization for FSM Inference.}
There are some static analyses that infer an FSM from loops~\cite{xie2016proteus,xie2019automatic,chen2019automated,shimizu2019symbolic}.
Chen et al.~\cite{chen2019automated} \mreplace{proposed a pattern-driven approach where}{assume that}
an FSM parsing loop \mreplace{is assumed to follow}{follows} a simple pattern\mreplace{.
On one hand, this approach}{ and thus} is not practical \mreplace{as}{for} real-world protocol parsers\mdelete{ rarely follow the patterns}.
For instance,
they regard a program variable as a state variable iff it is both modified in a loop iteration and referenced in future iterations.
They assume such state variables have a limited number of values, e.g., the variable \textit{state} in Figure~\ref{fig:motivating-example}
only has four possible values.
This assumption is often violated in a real protocol parser. 
A typical example is in Figure~\ref{fig:motivating-example-2} where the variable \textit{tok} satisfies their definition of state variables but its value is not enumerable.
\mreplace{On the other hand}{In addition}, this approach suffers from two explosion problems.
\mreplace{To infer FSM states}{First}, they regard every possible combination of the state variables as a state\mreplace{.
However,}{, but} the number of combinations could be explosive. For instance, if we have five state variables and each has five possible values, the resulting FSM will contain $5^5>3000$ states.
\mreplace{To infer state transitions}{Second},
they depend on symbolic execution, which is well-known to suffer from path explosion.
These explosion problems not only make static analysis unscalable \mreplace{and}{but also} significantly blow up FSMs with unnecessary states and transitions. Shimizu et al's approach has similar problems\mdelete{ as it also assumes simple code patterns and relies on symbolic execution}~\cite{shimizu2019symbolic}.

To the best of our knowledge,
\proteus~\cite{xie2016proteus,xie2019automatic} is the most recent and systematic approach to FSM inference.
It regards every path within the body of a parsing loop as an FSM state
and the dependency between two paths executed in two consecutive loop iterations as a state transition.
Figure~\ref{fig:motivating-example}(b) shows the FSM inferred by \proteus,
where $s_{i}$ represents a state and also a path that goes through Line $i$.
Each transition from $s_{i}$ to $s_{j}$ is labeled by the path condition of $s_{i}$.
It means that if the parser executes the path $s_{i}$ with its path condition,
the next iteration may execute the path $s_{j}$.
For instance,
the state transitions from $s_6$ to $s_{10}$, $s_{11}$, and $s_{12}$ are labeled by the condition $\sigma=\qa$.
It means that
if the loop executes the path $s_6$, of which the path condition is $\sigma=\qa$,
the loop may execute the path $s_{10}$, $s_{11}$, or $s_{12}$ in the next iteration.

The FSM inferred by \proteus\ is non-deterministic but \mdelete{absolutely }correct to represent the \mdelete{regular }format $(a|b)^{+}c$.
For instance, the string `$abbc$' can be parsed via the \mdelete{state }transitions $s_{6}s_{11}s_{16}s_{17}s_{19}$.
However, the FSM is too complex compared to the one we intend to implement, i.e., Figure~\ref{fig:insight}(a).
We observe that the core problem is that
it enumerates all paths in the loop body as a priori
but the number of paths \mdelete{in a program }is notoriously explosive.
Thus, the resulting FSM contains an \mreplace{explosive}{overwhelming} number of states and transitions, and \mreplace{the static analysis}{\proteus} is impractical due to path explosion.

\section{Technical Overview}\label{sec:technical_overview}

At a high level, we follow a similar idea in terms of regarding a loop iteration as an FSM state and dependency between loop iterations as state transitions.
However,
unlike \proteus, we do not enumerate all individual paths in the loop but put as many paths as possible into a path set which, as a whole, is regarded as a single FSM state.
This design simplifies the output FSM, significantly mitigates path explosion, but incurs new challenges.
In what follows, we discuss two examples,
one for our basic idea and the other for the detailed designs.

\defparbf{Basic Idea: Path Set as State.}
\mreplace{Our approach performs}{We perform} a precise abstract interpretation over each iteration of the parsing loop.
The basic steps of analyzing the code in Figure~\ref{fig:motivating-example} are shown in Figure~\ref{fig:overview-steps}.
In the first iteration of the parsing loop, due to the initial value of the variable \textit{state},
we analyze the paths $s_6$ and $s_7$,
depending on the condition: $\Phi_E\equiv \sigma=\qa \lor \sigma=\qb$.
Thus,
we create the state $E$ to represent the path set $\{s_6, s_7\}$
and label the outgoing edge of $E$ with the condition $\Phi_E$.

After the first iteration, the value of the variable \textit{state} is either `B‘ or `C'.
Thus, in the second iteration,
the abstract interpretation analyzes all paths in $F = \{ s_{10}, s_{11}, s_{12}, s_{15}, s_{16}, s_{17} \}$
with the path condition $\Phi_F\equiv \sigma=\qa \lor \sigma=\qb \lor \sigma=\qc$.
Hence, we create the state $F$ with the outgoing condition $\Phi_F$.

After the second iteration,
the value of the variable \textit{state} could be `B‘, `C', or `D'.
Thus, in the third iteration,
we analyze the paths in $H = F\cup G, G=\{s_{19}\}$
with the path condition $\Phi_H\equiv \sigma=\qa \lor \sigma=\qb \lor \sigma=\qc$.
Hence, we create the state $H$ with the outgoing condition $\Phi_H$.

Since the state $H$ overlaps the state $F$, we split $H$ into $F$ and $G$, just as in the last graph in Figure~\ref{fig:overview-steps}.
Since the state $H$ is split, the original edge from $F$ to $H$ is also split accordingly.
For instance, the condition from $F$ to $G$ is $\sigma=\qc$ because, only when we go through the paths $s_{12}, s_{17}\in F$, of which the path condition is $\sigma=\qc$,
we can reach the path $s_{19}\in G$.
The state $G$ is a final state because it stands for the path $s_{19}$ that leaves the parsing loop.
Finally, we merge the two $F$ states, forming a self-cycle as \mreplace{in the final FSM}{illustrated} in Figure~\ref{fig:motivating-example}\replace{(c)}{(e)}.

\newcommand{\myalg}[1]{
\begin{algorithm}[t]\footnotesize
    \caption{State Machine Inference.} #1
    \SetKwFunction{InferFSM}{infer\_state\_machine}
    \SetKwProg{Proc}{Procedure}{}{}
    \Proc{\InferFSM{$\mathbb{E}_\textit{init}$}}{
        $(S, \mathbb{E}_S) = \textup{\textcolor{red}{abstract\_interpretation}}~(\bE_\textit{init})$\;
        $\textit{Worklist} = \{ (S, \mathbb{E}_S) \}$;\enskip$\textit{FSM} = \emptyset$\;
        \While{\textit{Worklist} not empty}{
            $(S, \bE_S)$ = \textit{Worklist}.pop()\;
            $(S', \bE_{S'}) = \textup{\textcolor{red}{abstract\_interpretation}}~(\bE_S)$\;
            add $(S, \bE_S, S')$ into \textit{FSM}\;
            
            \color{blue}\tcc{splitting operations}\color{black}
            \ForEach{state $X$ that should be split}{
                \textup{\textcolor{red}{split}} $X$ into $X_1, X_2, \dots$\;
                replace $(X, \bE_X, Y) \in \textit{FSM}$ with $(X_i, \bE_{X_i}, Y)$\;
                replace $(Y, \bE_Y, X) \in \textit{FSM}$ with $(Y, \bE_{Y}, X_i)$\;
            }
            \textbf{assume} $S'$ is split into $S'_i$, or $S'\equiv S'_i$ if $S'$ is not split\;
            \If{$\not\exists(S_i', \bE_{S'_i}, *)\in \textit{FSM}$, where $*$ means any state}{
                add $(S_i', \bE_{S'_i})$ into \textit{Worklist}\;
            }
        
            \color{blue}\tcc{merging operations}\color{black}
            \textup{\textcolor{red}{merge}} states that represent the same path set into one state\;
            \ForEach{pair of states $(X, Y)$ such that there are multiple transitions $(X, \bE_{X1}, Y), (X, \bE_{X2}, Y), \dots \in \textit{FSM}$}{
                $\bE_X = \textup{\textcolor{red}{merge}}(\bE_{X1}, \bE_{X2}, \dots)$\;
                replace all $(X, \bE_{Xi}, Y)$ with $(X, \bE_X, Y)$ in \textit{FSM}\;
                \textbf{if~}$\forall \bE_{Xi}.\bE_{X} \not\equiv \bE_{Xi}$ \textbf{then} add $(X, \bE_X)$ into \textit{Worklist}\;
            }
        }
        \Return \textit{FSM}\;
    }
\end{algorithm}
}
\myalg{\label{alg:framework}}

\defparbf{Algorithm Framework.}\remark{Improve to a worklist algorithm for fixed-point computation. This improvement does not change our core idea but add details to facilitate the proof of convergence and soundness.}
\startreviseblock
Algorithm~\ref{alg:framework} sketches out our approach.
Its parameter is the initial program environment $\bE_\textit{init}$,
which provides necessary program information such as the initial path condition and the initial value of every program variable before entering a parsing loop.
Line~2 analyzes the first iteration of the parsing loop and outputs the analyzed path set as well as the resulting program environment, i.e., $(S, \bE_S)$.
Line~3 initializes the FSM and a worklist.

The FSM is represented by a set of state transitions. Each transition is a triple $(S, \bE_S, S')$
and describes the analyses of two consecutive iterations of the parsing loop --- 
one analyzes the path set $S$ and outputs $\bE_S$; the other uses $\bE_S$ as the precondition, which lets us analyze the path set $S'$.
Each item in the worklist is the analysis result from an iteration of the parsing loop, i.e., $(S, \bE_S)$.
We use the worklist to perform a fixed-point computation.
That is, whenever we get a new pair $(S, \bE_S)$ that has not been included in the FSM,
we add it to the worklist, because using a new $\bE_S$ as the initial program environment may result in new analysis results from the parsing loop.

Lines~5-7 continue the analysis of the next loop iteration and add the new state transition to the FSM.
Lines 8-11 split a state into multiple sub-states, just like we split the state $H$ in Figure~\ref{fig:overview-steps}.
Lines~12-14 update the worklist by adding $(S'_i, \bE_{S'_i})$ if the pair has not been included in the FSM.
Line~15 merges the states that represent the same path set, just like that we merge the two states $F$ in the last example.
If the procedure above yields multiple but non-equivalent transitions
between a pair of states, e.g., $(X, \bE_{Xi\ge1}, Y)$,
Lines~16-19 merge them into one, $(X,\bE_X, Y)$.
If $\bE_X\equiv\bE_{Xi}$,
we do not need to add $(X,\bE_{X})$ to the worklist,
because the resulting transition $(X,\bE_{Xi}, Y)$ has been in the FSM.
Otherwise, $(X,\bE_{X})$ should be added to the worklist for further computation.

The details of the merging operation will be discussed later in \S\ref{sec:approach},
but it is sound and also guarantees the convergence of a fixed-point computation. 
That is,
while we keep merging transitions from $X$ to $Y$ whenever a new transition between the two states is produced, the merging operation ensures that we will not endlessly generate new transitions from $X$ to $Y$. Instead, it will converge, i.e., reach a fixed point.
\dismissreviseblock
\delete{Algorithm~\ref{alg:framework} sketches out our approach.
Its parameter is the initial program environment $\mathbb{E}_\textit{init}$,
which provides necessary program information such as path conditions and possible values of a program variable.
Line 2 analyzes the first loop iteration and Line 3 initializes the state machine. The analysis of the first iteration returns the first state transition
where $S$ is the source state, $\top$ denotes an unknown target state, 
and the program environment $\mathbb{E}$ after the first loop iteration.
The program environment $\mathbb{E}$ includes the path condition $\Phi_S$
that we put over the edge between $S$ and $\top$.
At Lines 4-5, for every state transition that has an unknown target state,
we continue the analysis over the next loop iteration and generated a new state $S'$ as the target state.
Lines 6-7 update the FSM by adding the newly-created state transitions.
Lines 8-9 try to split the state $S'$ into multiple smaller states $S'_i$, just like we split the state $H$ in the previous example.
At the same time, the incoming program environment $\mathbb{E}$ is split into multiple incoming environments $\mathbb{E}_{S\rightarrow S_i'}$ accordingly.
This is just like when we split the state $H$, we also split the incoming condition of the state $H$ in the last example.
Lines 10-12 update the state machine by adding the new state transitions after splitting the state.
Lines 13-14 try to merge each new state with an old one,
just like we merge the two states $F$ in the last example.}

\defparbf{Controlled State Splitting and Merging.}
The previous example shows the power of regarding multiple paths as a single state,
which mitigates the path explosion problem and produces compressed FSMs.
However, \mdelete{it also introduces new challenges as }we observe that we cannot 
arbitrarily put all possible paths in a \mrevise{single} state. 
Otherwise, invalid FSMs may be generated or the algorithm performance may be seriously degraded.
Thus,
we establish dedicated rules to control state splitting and merging.
They are implemented into two key operations in Algorithm~\ref{alg:framework}, namely \texttt{split} and \texttt{merge}. 
Next, we informally discuss them in three parts: (1) we list the rules of splitting and merging states; (2) we use a detailed example to show how these rules are used; and (3) we briefly justify the rationale behind the rules.

\defparit{(1) Splitting and Merging Rules.}
We establish the following rules to split a state or merge multiple states.
\begin{itemize}
    \item Splitting Rule (\textbf{SR1}): If two states represent overlapping path sets, we split them into multiple disjoint path sets.
    This rule has been illustrated in Figure~\ref{fig:overview-steps} where the state $H$ is split into $F$ and $G$,
    so that we can reuse the state $F$.
    
    \item Splitting Rule (\textbf{SR2}): If a state represents a path set that includes both loop-exiting paths and paths that go back to the loop entry, we split it into a final state containing the exiting paths and a state containing the others. 
    Otherwise, it will be hard to decide if an FSM terminates.
    
    \item Splitting Rule (\textbf{SR3}): If a state represents a path set 
    where a variable is defined recursively in some paths, these paths should be isolated from others.
    For example, the paths $s_{12}$ and $s_{13}$ in Figure~\ref{fig:motivating-example-2}
    define the variable \textit{tok} in two manners. The path $s_{13}$ defines the variable \textit{tok} recursively based on its previous value. Hence, \mrevise{we put} the two paths $s_{12}$ and $s_{13}$ \mreplace{should not be in the same path set}{in different path sets}. 
    
\begin{figure*}[t]
    \centering
    \includegraphics[width=0.98\textwidth]{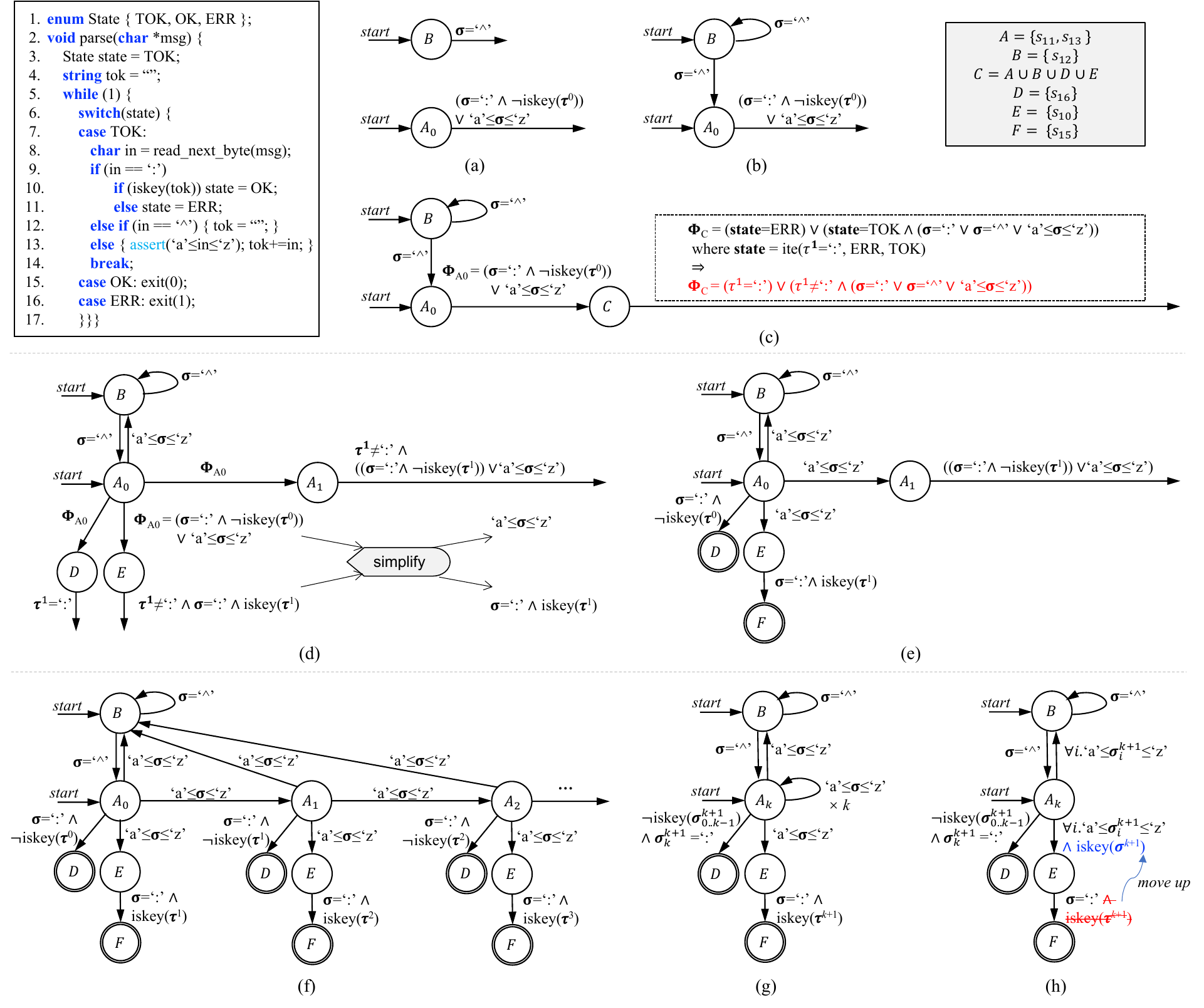}
    \vspace{-3.5mm}
    \caption{A detailed example. (a)-(h) The steps of FSM inference.}
    \vspace{-2.8mm}
    \label{fig:motivating-example-2}
\end{figure*}

    \item Merging Rule (\textbf{MR1}): 
    \mreplace{If two states represent the same path set with the same pre-/post-conditions,}{Given a set of states that represent the same path set with the same path conditions,} we merge them into a single state.
    This rule has been illustrated in Figure~\ref{fig:overview-steps} where we merge the two states $F$.
    
    \item Merging Rule (\textbf{MR2}):\remark{The change here is to include merging operations by interval-domain abstract interpretation. This allows our analysis to fall back to interval-domain abstract interpretation when induction fails, such that convergence and soundness are guaranteed.}\delete{ Given a sequence of state transitions, e.g., $\delta(A_0, \Phi_{A_0}) = \{ A_1 \}$, $\delta(A_1, \Phi_{A_1}) = \{ A_2 \}$, ..., $\delta(A_{n-1}, \Phi_{A_{n-1}}) = \{ A_{n} \}$, $\delta(A_n, \Phi_{A_n}) = \{ B \}$,
        where the states $A_i$ represent the same path set and their pre-/post-conditions aren't equivalent but inductively summarizable, e.g., $\Phi_{A_i} \equiv \sigma=i$,
        we merge them into $\delta(A, \Phi_{A}) = \{ B \}$ where $\Phi_{A}$ is the inductive form of all $\Phi_{A_i}$, e.g., $\Phi_{A}\equiv \sigma^{n+1}_k = k$.} \revise{Given a sequence of transitions between a pair of states,
    we merge them into a single transition either by induction or, if induction fails, via a widening operator from classic abstract interpretation. Let us use the following examples to illustrate.}
\startreviseblock
    \begin{itemize}
        \item Given multiple transitions between a pair of states where the transition constraints form 
        a sequence such as $\sigma=1$, $\sigma=2$, $\sigma=3$, $\dots$,
        we can apply inductive inference~\cite{angluin1983inductive} to
        merge them into a single state transition with the constraint $\sigma = k$, meaning the $k$th transition constraint.
       
        \item If the transition constraints are $\sigma=0$, $\sigma=3$, $\sigma=1$, $\dots$,
        we cannot inductively merge them as before.
        Instead, we merge them into $0\le \sigma \le 3$ using the classic widening operator from interval-domain abstract interpretation~\cite{cousot1977abstract}. This merging operation is sound but may lose precision.
    \end{itemize}
\dismissreviseblock
    
    \item Merging Rule (\textbf{MR3}): 
    \mrevise{To ensure the validity, i.e., a state transition does not refer to inputs consumed by previous transitions, we perform this rule after Algorithm~\ref{alg:framework} terminates. That is, g}\mdelete{G}iven two consecutive transitions, 
    e.g., $\delta(A, \Phi_A) = \{ B \}$ and $\delta(B, \Phi_B) = \{ C \}$,
    they are valid by definition iff $\Phi_A$ and $\Phi_B$ respectively constrain
    two consecutive but disjoint parts of an input stream.
    If the inputs constrained by $\Phi_A$ and $\Phi_B$ overlap,
    we \mrevise{either (1) replace
        the transition constraints with $\Phi_A'$ and $\Phi_B'$ such that $\Phi_A' \land \Phi_B' \equiv \Phi_A \land \Phi_B$ and
        neither $\Phi_A'$ nor $\Phi_B'$ refers to previous inputs, or (2)}
    merge the transitions, yielding $\delta(A, \Phi_A\land \Phi_B) = \{ C \}$ if $\Phi_A'$ and $\Phi_B'$ cannot be computed.
\end{itemize}

\startreviseblock
\begin{theorem}[Convergence]\label{th:convergence}
    The splitting and merging rules guarantee the convergence of Algorithm~\ref{alg:framework}.
\end{theorem}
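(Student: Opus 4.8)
The plan is to prove termination of the worklist loop (the \textbf{While} block spanning Lines~4--19 of Algorithm~\ref{alg:framework}) by showing that the only two events that can enlarge the worklist can each fire only finitely often: discovering a genuinely new state (the enqueue at Lines~12--14) and discovering a genuinely new transition constraint between an already-present pair of states (the enqueue at Line~19, guarded by $\forall \bE_{Xi}.\,\bE_X\not\equiv\bE_{Xi}$). Once neither event can fire, every item popped from the worklist reproduces only transitions already recorded in \textit{FSM}, so the worklist drains and the algorithm halts at a fixed point. Note that MR3 is applied only after Algorithm~\ref{alg:framework} terminates, so it plays no role in this argument.

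First I would bound the number of states. Since the body of the parsing loop contains finitely many paths, every path set $S$ returned by \textcolor{red}{abstract\_interpretation} is drawn from the finite powerset of loop-body paths, and no environment update can manufacture a path outside this fixed collection. I would then argue that the splitting rules only ever refine the current collection of path sets: SR1 replaces overlapping sets by disjoint pieces, which is precisely partition refinement; SR2 and SR3 are one-shot structural cuts (loop-exiting versus looping paths, and recursive versus non-recursive variable definitions) that depend only on the fixed syntax of the loop, not on any accumulated constraints. Because the lattice of partitions of a finite set is itself finite and SR1--SR3 move strictly upward in it (or leave it unchanged), splitting can climb the lattice only finitely often, and MR1 (Line~15) collapses any two states carrying the same path set. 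Hence the reachable state set stabilizes to a finite collection after finitely many steps.

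Next I would bound, for each of the now finitely many ordered pairs $(X,Y)$, the number of distinct transition constraints; this is exactly the work done by MR2 at Lines~16--19. When the accumulated constraints form an inductively summarizable sequence, inductive inference replaces the whole family by a single closed-form constraint in one step. When they do not, MR2 falls back on the widening operator of interval-domain abstract interpretation~\cite{cousot1977abstract}; by the ascending-chain property of widening, iterating it on the sequence of constraints attached to $(X,Y)$ reaches a stable limit after finitely many applications. In either case the constraint on $(X,Y)$ eventually stops changing, so the guard at Line~19 ultimately fails and nothing further is enqueued for that pair.

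I expect the main obstacle to be the interaction between splitting and merging: a merge at Lines~16--19 rewrites an environment $\bE_X$, re-enqueues $(X,\bE_X)$, and the ensuing call to \textcolor{red}{abstract\_interpretation} may return a path set that again overlaps an existing state and thereby re-triggers SR1, so the two phases are not cleanly sequential. I would close this gap with the observation that merging only updates constraints and never produces a path outside the fixed, finite path collection; consequently the \emph{structural} component of the computation — the partition underlying the state set — remains monotone in a finite lattice regardless of how constraint updates are interleaved, and therefore freezes after finitely many splits. Formally I would use a lexicographic ranking whose first coordinate is the position of the current partition in the finite refinement lattice: any step that enqueues work either strictly climbs this lattice (possible only finitely often) or, with the partition fixed, advances one of the finitely many widening chains toward its limit (possible only finitely often by the ascending-chain condition). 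The one place demanding care is verifying that no rule can simultaneously refine the partition and reset an already-stabilized chain; establishing this independence is what makes the lexicographic measure well founded and yields convergence.
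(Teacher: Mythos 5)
Your proof is correct and takes essentially the same route as the paper's: bound the number of states by the finiteness of loop-body paths under SR1 (with MR1 collapsing duplicates), then bound the transitions per pair of states by the convergence of MR2's inductive inference and interval-domain widening. Your lexicographic measure for the interleaving of splitting and merging is a more careful treatment of a step the paper's proof passes over silently, but it formalizes the same underlying argument rather than a different one.
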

\begin{proof}   
Given a parsing loop that contains $n$ program paths in the loop body,
SR1 ensures that we split these paths into at most $n$ disjoint path sets.
Thus, Algorithm~\ref{alg:framework} generates at most $n$ states.
While we may generate different transitions between a pair of states,
Algorithm~\ref{alg:framework} leverages MR1-2 to merge them by conventional inductive inference~\cite{angluin1983inductive} or interval-domain abstract interpretation~\cite{cousot1977abstract}, until a fixed point is reached.
Thus, we compute at most one fixed-point transition between each pair of states.
Since both the inductive inference and abstract interpretation converge,
Algorithm~\ref{alg:framework} converges after generating at most $n$ states and $n^2$ fixed-point state transitions.
\end{proof}
\dismissreviseblock

\noindent
\textit{\underline{\smash{(2) Detailed Example.}}}
\mdelete{Let us consider the code in }Figure~\ref{fig:motivating-example-2}\mdelete{,
which} shows a common \mreplace{and}{but} complex case in protocol parsers.
\mreplace{The code tries to find}{It looks for} a nonempty token \mreplace{before the delimiter \qcolon\
but after the symbol \qcircum}{between the symbol \qcircum\ and the symbol \qcolon}.
The token \textit{tok} is initialized to be an empty string and is reset
when the input is \qcircum\ (Line~12).
If the input character is a letter, 
the character is appended to \textit{tok} (Line~13).
If the input character is \qcolon, it will check if the token \textit{tok} is a nonempty keyword (Line~10).

\textbf{Figure~\ref{fig:motivating-example-2}(a).}
Since \mrevise{the variable \textit{state} and} the variable \textit{tok} \mreplace{is}{are respectively} initialized as \mrevise{\textit{TOK} and} an empty string,
in the first iteration,
we analyze the paths $s_{11}$, $s_{12}$, and $s_{13}$ as other paths are infeasible.
By SR3, the paths $s_{12}$ and $s_{13}$ cannot be in the same state.
Thus, we create the states $A_0=\{s_{11}, s_{13}\}$
and $B=\{ s_{12} \}$.
The outgoing constraint of each state is the path constraint, where we use the symbol $\tau^n$ to represent the input byte stream of length $n$ before the current loop iteration.
In the first iteration, \textit{tok} is an empty string and denoted as $\tau^0$.

\textbf{Figure~\ref{fig:motivating-example-2}(b).}
The first iteration creates two states, $A_0=\{ s_{11}, s_{13} \}$ and $B=\{ s_{12} \}$.
If we follow the state $B$, i.e., 
the first iteration runs the path $s_{12}$,
the code only resets the variable \textit{tok} and, after the reset, it is like we never enter the loop.
Hence, in the second iteration, we analyze the paths in $A_0\cup B$ again just as in the first iteration.
By MR1,
we reuse the state $A_0$ and the state $B$.
That is,
we add a self-cycle on the state $B$ and a transition from the state $B$ to the state $A_0$.

\textbf{Figure~\ref{fig:motivating-example-2}(c).}
If we follow the state $A_0$, i.e.,
the first iteration runs the paths in $A_0 = \{s_{11}, s_{13}\}$,
the second iteration will analyze the paths in $C = \{s_{10},s_{11},s_{12},s_{13},s_{16}\}$.
Thus, we create the state $C$
and add the transition from $A_0$ to $C$.
The outgoing transition of $C$ is the path condition of all paths in $C$.

\textbf{Figure~\ref{fig:motivating-example-2}(d).}
By SR1 and SR2, we split the state $C$ \mrevise{in}to four sub-states $A_1$, $B$, $D=\{s_{16}\}$, and $E=\{s_{10}\}$.
We reuse the state $B$ but create a new state $A_1$ because 
the states $A_0$ and $A_1$ have different post-conditions.
We then replace the state $C$ with the four sub-states.
The transition constraint from the state $A_0$ to each sub-state
is the original constraint from the state $A_0$ to the state $C$.
The outgoing constraint of each sub-state is the constraint of paths represented by the sub-state.
For instance, for the sub-state $D=\{s_{16}\}$, its path condition is $\textit{state}=\textup{ERR}$ where
the value of \textit{state} is $\textit{ite}(\tau^1=\qcolon, \textup{ERR}, \textup{TOK})$,
meaning that if the previous input is \qcolon,  $\textit{state}=\textup{ERR}$\mreplace{. O}{~and, o}therwise, $\textit{state}=\textup{TOK}$.
Thus, the outgoing constraint of $D$ is $\tau^1=\qcolon$.

The incoming and outgoing constraints of a state can be cross-simplified.
For instance, the outgoing constraint of $E$ includes $\tau^1\ne \qcolon$. This means that the 
incoming constraint of $E$ satisfies $\sigma\ne\qcolon$, and thus, can be simplified to $\qa\le\sigma\le\qz$.

\textbf{Figure~\ref{fig:motivating-example-2}(e).}
\mreplace{After cross-simplifying incoming/outgoing constraints,
w}{W}e continue a similar analysis of the next iteration from the states $D$, $E$, or $A_1$ because they have undetermined target states.
From the state $D=\{s_{16}\}$, since the path $s_{16}$ exits the loop, we stop the analysis and mark the state $D$ as a final state.
Similarly, we can find the final state $F$. 

\textbf{Figure~\ref{fig:motivating-example-2}(f) and Figure~\ref{fig:motivating-example-2}(g).}
If we continue the analysis from the state $A_1$,
we will find a repetitive state sequence, i.e., $A_0$, $A_1$, $A_2$, and so on.
We use MR2 to inductively merge them into $A_k$ as shown in Figure~\ref{fig:motivating-example-2}(g).
The merged state $A_k$ means the $(k+1)$th state $A$. Thus, the self-cycle on $A_k$ loops $k$ times and each time consumes an input satisfying $\qa\le\sigma\le\qb$.
For state transitions, e.g. the one from $E$ to $F$,
since the constraints between them in Figure~\ref{fig:motivating-example-2}(f) form the sequence:
$\sigma=\mbox{`:'}\land \textup{iskey}(\tau^1)$, $\sigma=\mbox{`:'}\land \textup{iskey}(\tau^2)$, $\sigma=\mbox{`:'}\land \textup{iskey}(\tau^3)$ and so on,
the transition constraint from $E$ to $F$ in Figure~\ref{fig:motivating-example-2}(g) is summarized as $\sigma=\mbox{`:'}\land \textup{iskey}(\tau^{k+1})$.

\textbf{Figure~\ref{fig:motivating-example-2}(h).}
To ensure that a state transition does not refer to symbols in previous transitions,
we merge the incoming and outgoing constraints of the state $A_k$ and $E$ by MR3, yielding the final FSM in Figure~\ref{fig:motivating-example-2}(h).
The inferred FSM is correct.
For instance,
given a string ``\textasciicircum\textasciicircum\textasciicircum abcd:'' where we assume ``abcd'' is a keyword,
the FSM can parse it by the transitions $BBBA_kEF$.
That is,
the transitions $BBBA_k$ consumes the prefix ``\textasciicircum\textasciicircum\textasciicircum''
and the transition from $A_k$ to $E$ consumes the keyword ``abcd'' by instantiating the induction variable $k=4$. \mrevise{Finally, the transition from $E$ to $F$ consumes the colon.}

\begin{figure}[t]
    \centering
    \includegraphics[width=0.94\columnwidth]{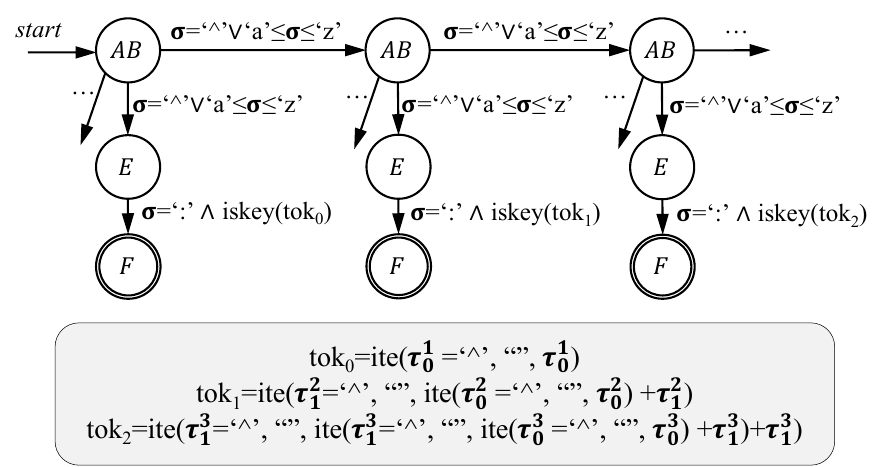}
    \vspace{-2mm}
    \caption{Violation of SR3.}\label{fig:sr3}
    \vspace{-5mm}
\end{figure}

\defparit{(3) Consequences of Violating Rules.}
\mrevise{As stated in the proof of Theorem~\ref{th:convergence},}
SR1, MR1, and MR2 contribute to the convergence of the algorithm.\mdelete{SR1 splits overlapped states so that we can reuse their common part.
MR1 and MR2 merge states to avoid repetitively generating similar states.}
Violating these rules may \mdelete{slow down the algorithm or even }make the algorithm not terminating.
SR2 and MR3 ensure the validity of an FSM by definition.
\mrevise{That is, SR2 distinguishes final states from other states, and MR3 ensures that a state transition does not refer to symbols in previous transitions.}

\mrevise{Particularly,} SR3 facilitates the use of \mrevise{induction in} MR2\mdelete{ for merging states by induction}.
Figure~\ref{fig:sr3} shows the case where we do not use SR3 and, thus, merge the states $A$
and $B$.
In this case,
after each iteration, the variable \textit{tok} may be either reset or recursively defined,
depending on if the previous input is `\textasciicircum'.
In result,
the value sequence of the variable \textit{tok}, as shown in Figure~\ref{fig:sr3},
cannot be summarized as an expression parameterized by an induction variable $k$.
\mdelete{In consequence, the repetitive states cannot be summarized and it will continue to generate such repetitive (non-equivalent/non-summarizable) state sequence.}\mrevise{According to MR2, to merge such repetitive states, we have to rely on widening operators, which are sound but imprecise~\cite{cousot1977abstract}.}
Recall that, 
in Figure~\ref{fig:motivating-example-2}(f) where SR3 is used,
the value of \textit{tok} is a sequence of $\tau^1$, $\tau^2$, $\tau^3$, and so on. Thus,
we can precisely summarize its value as $\tau^{k+1}$ \mrevise{via MR2}.

\section{Formalizing the Approach}\label{sec:approach}

\mreplace{This section formalizes our idea.
The formula}{In this section, the notation}
$a[b/c]$ returns the expression $a$ after using $b$ to replace all occurrences of $c$ in $a$.
We use $\textup{sat}(\phi)$ and $\textup{unsat}(\phi)$ to
mean that the constraint $\phi$ is satisfiable and not.
An ite($v_1, v_2, v_3$) formula returns \mdelete{$v2$}\mrevise{$v_2$} and $v_3$ if the condition $v_1$ is true and false, respectively.
We use a simplification procedure~\cite{dillig2010small},
$\phi_1' = \textup{simplify}(\phi_1, \phi_2)$,
to simplify $\phi_1$ but keep the equivalence of $\phi_1$ and $\phi_1'$
in terms of $\phi_2 \Rightarrow(\phi_1\equiv \phi_1')$.
\delete{We use $\phi_1', \phi_2' = \textup{cross-simplify}(\phi_1, \phi_2)$ to mean $\phi_1' = \textup{simplify}(\phi_1, \phi_2)$ and $\phi_2' = \textup{simplify}(\phi_2, \phi_1)$.}\remark{directly use simplify() without creating a new notation cross-simplify()}

\defparbf{Abstract Language.}
For clarity, we use a C-like language in Figure~\ref{fig:design_lang} to model 
a parser that implements an FSM via a do-while loop.
We use a do-while loop as it is a general form of loops with initialization, i.e., $\mathcal{S}; \textbf{while}(1) \{\mathcal{S};\}$.
The statements
could be assignments, binary operations,
read statements that read the next byte of a message to parse,
exit statements that exit the loop, and
branching statements that are uniquely identified by the identifier $\kappa$.
To use our approach,
users manually annotate the statement reading the inputs, e.g., the read function. 
The rest is fully automated.
Although we do not include function calls or returns for simplicity, 
our system is interprocedural as a call statement is equivalent to assignments from the actual parameters
to the formals, and a return statement is an assignment from the return value to its receiver.
The language abstracts away
pointer operations because the pointer analysis is not our
technical contribution and, in the implementation, we follow
existing works to resolve pointer relations~\cite{xie2005scalable}.
We do not assume nested loops for simplicity as we focus on the outermost loop that implements the FSM.
In practice, we observe that inner loops often serve for parsing repetitive fields in a network message rather than implementing the FSM. Hence, in the implementation, we follow traditional techniques to analyze inner loops~\cite{godefroid2011automatic,saxena2009loop}.

\begin{figure}[t]
    \centering
    \footnotesize
    \begin{tabular}{rcll}
        \textit{Parser}~$\mathcal{P}$ &:=&$\textbf{do}~\{~\mathcal{S};~\}~\textbf{while}(1);$ &~\\
        \textit{Statement}~$\mathcal{S}$ &:=& $v_1 \leftarrow v_2$ & \textbf{~::assign}\\
        & ~ & $~|~v_1 \leftarrow v_2 \oplus v_3$ & \textbf{~::binary}\\
        & ~ & $~|~v_1 \leftarrow \textit{read}()$& \textbf{~::read}\\
        & ~ & $~|~\textit{exit}()$& \textbf{~::exit}\\
        & ~ & $~|~\textbf{if}_\kappa~(v)~\{ \mathcal{S}_1; \}~\textbf{else}~\{ \mathcal{S}_2; \}$& \textbf{~::branching}\\
        & ~ & $~|~\mathcal{S}_1;\mathcal{S}_2$ & \textbf{~::sequencing}\\
        &&&\\
        & \multicolumn{3}{l}{$\oplus\in \{ \land, \lor, +, -, >, <, =, \ne, \dots \}$}
    \end{tabular}
    \vspace{-2mm}
    \caption{Language of target programs.}
    \label{fig:design_lang}
\end{figure}

\begin{figure}\footnotesize\centering
    \begin{tabular}{rcll}
        \textit{Abstract Value}~$\tilde{v}$ &:=& $c$ & \textbf{~::constant value}\\
        & ~ & $~|~\sigma^k$ & \textbf{~::current input of length $k$}\\
        & ~ & $~|~\tau^k$ & \textbf{~::previous input of length $k$}\\
        & ~ & $~|~\tilde{v}_1 \oplus \tilde{v}_2$& \textbf{~::binary formula}\\
        & ~ & $~|~\textup{ite}(\tilde{v}_1, \tilde{v}_2, \tilde{v}_3)$& \textbf{~::if-then-else formula}\\
        & ~ & $~|~\textup{int}(c_1, c_2)$& \textbf{~::interval}
    \end{tabular}
    \vspace{-2mm}
    \caption{Abstract values.}
    \vspace{-4mm}
    \label{fig:design_absval}
\end{figure}

\begin{figure*}[t]
    \centering
    \begin{minipage}[t]{\textwidth}\footnotesize
        $$
        \inference{
            \mathbb{I}\textup{ uses }\tau^k_i, \sigma^{k'}_{i'}
        }
        {
            \mathbb{I}, \phi
            \vdash:
            \mathbb{I}[\tau^{k+k'}_{i}/\tau^{k}_{i}][\tau^{k+k'}_{k+i'}/\sigma^{k'}_{i'}], \textup{true}
        }[\textbf{Init}]
        \quad
        \inference{
            \mathbb{I}(v_2) = \tilde{v}_2
        }
        {
            \mathbb{I}, \phi \vdash v_1 \leftarrow v_2: \mathbb{I}\cup (v_1, \tilde{v}_2), \phi
        }[\textbf{Assign}]
        \quad
        \inference{
            \mathbb{I}(v_2) = \tilde{v}_2\enskip\enskip\enskip \mathbb{I}(v_3) = \tilde{v}_3
        }
        {
            \mathbb{I},  \phi \vdash v_1 \leftarrow v_2\oplus v_3: \mathbb{I}\cup\{ (v_1, \tilde{v}_2\oplus \tilde{v}_3) \}, \phi
        }[\textbf{Binary}]
        $$
        \\
        $$
        \inference{
            \mathbb{I}\textup{ uses }\sigma^{k}_{i}
        }
        {
            \mathbb{I}, \phi  \vdash v_1 \leftarrow \textit{read}(): \mathbb{I}[\sigma^{k+1}_i/\sigma^{k}_i]\cup \{(v_1, \sigma^{k+1}_k)\}, \phi
        }[\textbf{Read}]
       \quad
    \inference{
        ~
    }
    {
        \mathbb{I}, \phi \vdash \textit{exit}(): \mathbb{I}, \phi
    }[\textbf{Exit}]
        \quad
        \inference{
            \mathbb{I}_1, \phi_1\vdash \mathcal{S}_1: \mathbb{I}_2, \phi_2
            \enskip\enskip\enskip
            \mathbb{I}_2, \phi_2 \vdash \mathcal{S}_2: \mathbb{I}_3, \phi_3
        }
        {
             \mathbb{I}_1, \phi_1 \vdash \mathcal{S}_1;\mathcal{S}_2: \mathbb{I}_3, \phi_3
        }[\textbf{Sequencing}]
        $$
        \\
        $$
        \inference{
            \mathbb{I}(v) = \tilde{v}
            \enskip
            \enskip
            \enskip 
            \mathbb{I}\cup\{(\kappa, \tilde{v})\}, \phi\land \kappa
            \vdash \mathcal{S}_1: 
            \mathbb{I}_1,  \phi \land \phi_1
            \enskip
            \enskip
            \enskip
            \mathbb{I}\cup\{(\kappa, \tilde{v})\}, \phi\land \lnot \kappa
            \vdash \mathcal{S}_2: 
            \mathbb{I}_2, \phi \land \phi_2
        }
        {
            \mathbb{I}, \phi
            \vdash 
            \textbf{if}_\kappa~(v)~\{ \mathcal{S}_1; \}~\textbf{else}~\{ \mathcal{S}_2; \}: 
            \left\{
            \begin{array}{lr}
                \mathbb{I}_1, \phi\land\phi_1 &  \textup{unsat}(\phi[\mathbb{I}(\kappa')/\kappa'] \land \lnot \tilde{v})\\
                \mathbb{I}_2, \phi\land\phi_2 & \textup{unsat}(\phi[\mathbb{I}(\kappa')/\kappa'] \land \hspace{1.6mm} \tilde{v})\\
                \{ (u, \textup{ite}(\tilde{v}, \tilde{u}_1, \tilde{u}_2) : (u, \tilde{u}_1)\in \mathbb{I}_1 \land (u, \tilde{u}_2)\in \mathbb{I}_2) \}, \phi\land(\phi_1\lor\phi_2) &  \textit{otherwise}
            \end{array}
            \right.
        }[\textbf{Branching}]
        $$
    \end{minipage}
    \caption{Transfer functions as inference rules for analyzing a loop iteration.}
    \vspace{-2.5mm}
    \label{fig:semantics_basic}
\end{figure*}

\defparbf{Abstract Domain.}
An abstract value of a variable represents all possible concrete values that may be assigned to the variable during program execution.
The abstract domain specifies the limited forms of an abstract value.
In our analysis,
the abstract value of a variable $v$ is denoted as $\tilde{v}$ and defined in Figure~\ref{fig:design_absval}.
An abstract value could be a constant value $c$ and a byte stream of length $k$, i.e., $\sigma^k$ and $\tau^k$,
which respectively represent the input byte stream read in the current loop iteration
and the previous iterations.\delete{We use $\sigma^n_i$ and $\sigma^n_{i..j}$ to represent the $(i+1)$th byte and a subsequence of $\sigma^n$, respectively.
We simply write $\sigma$ as a shorthand of $\sigma^1_0$.}\remark{Since we have defined this before, we do not repeat it here according to reviewers' comments.}
The symbols $\tau^n_i$, $\tau^n_{i..j}$, and $\tau$ are defined similarly \revise{as $\sigma^n_i$, $\sigma^n_{i..j}$, and $\sigma$}.
An abstract value can also be a first-order logic formula over other abstract values.
To ease the explanation, we only support binary and \textup{ite} formulas.
\revise{Especially, we also include an interval abstract value to mean a value between two constants. As discussed later in Algorithm~\ref{alg:mr2}, such interval abstract values allow our analysis to fall back to conventional interval-domain abstract interpretation~\cite{cousot1977abstract}, in order to guarantee convergence and~soundness.}

\defparbf{Abstract Interpretation.}
The abstract interpretation is described as transfer functions of \mrevise{each} program statement\mdelete{s in Figure~\ref{fig:design_lang}}. Each transfer function updates the program environment $\mathbb{E}= (\mathbb{I}, \phi)$.
Given the set $\bV$ of program variables and the set $\tilde{\bV}$ of abstract values,
$\mathbb{I}: \bV\mapsto \tilde{\bV}$ maps a variable to its abstract value.
\delete{The sets $\bM\subseteq \bV$ and $\bR\subseteq \bV$ implement a Mod/Ref analysis.
When analyzing a loop iteration,
the set $\bM$ records the variables revised in the current iteration,
and the set $\bR$ records a variable if it is used in the current iteration but references a value assigned to the variable in previous iterations.}\remark{The Mod/Ref analysis is not used and, thus, is removed.}
The constraint $\phi$ captures the \mrevise{skeletal} path constraint\mrevise{, which stands for a path set executed in a single loop iteration}\mdelete{of a loop iteration}.
We say $\phi$ is a skeletal path constraint
    because it is 
    in a form of conjunction or disjunction
    over the symbols $\kappa$ or $\lnot \kappa$, e.g., $\kappa_1 \land (\kappa_2 \lor \lnot \kappa_2)$,
    where each symbol $\kappa$ uniquely identifies a branch
    and is not evaluated to its branching condition.
    The real path constraint is denoted by the uppercase Greek letter $\Phi = \phi[\mathbb{I}(\kappa)/\kappa]$ where each $\kappa$ is replaced by its abstract value.
\mrevise{We list the transfer functions in Figure~\ref{fig:semantics_basic},
    which describe how we analyze a loop iteration, i.e., the procedure \texttt{abstract\_interpretation} in Algorithm~\ref{alg:framework}.
    In these transfer functions,
    we use $\mathbb{E} \vdash \mathcal{S}: \mathbb{E}'$ to describe the environment before and after a statement.}\mdelete{Figure~\ref{fig:semantics_basic} lists the transfer functions as inference rules.
    The rules describe how we analyze a loop iteration, i.e., \texttt{abstract\_interpretation}
    in Algorithm~\ref{alg:framework}.}

\mreplace{For initialization, we assume}{To initialize the analysis of a loop iteration, we set} the initial environment \mreplace{is}{to} $\mathbb{E} = (\mathbb{I}, \phi)$, which is obtained from the previous iteration\mrevise{,} and \mrevise{assume that} abstract values in $\bI$ use the symbols $\tau^k_i$ and $\sigma^{k'}_{i'}$.
This means that the previous iteration depends on an input stream of length $k+k'$, in which
$k$ bytes from iterations before the last iteration and $k'$ bytes from the last iteration.
For the current iteration,
all $k+k'$ bytes are from previous iterations.
Hence, we rewrite all $\sigma$ to $\tau$\mdelete{ for initialization}.

The rules for assignment, binary operation, read, and exit are straightforward,
which update the abstract value of a variable.
\delete{During the analysis,
if a variable is modified, it is added to the set $\bM$;
and if a referenced variable is not in the set $\bM$, its value may come from previous iterations
and the variable is added to the set $\bR$.}The sequencing rule says that, for two consecutive statements,
we analyze them in order.
The branching rule\delete{, Br-I, Br-II, and Br-III,} states how we handle conditional statements.\delete{The rules Br-I and Br-II
deal with the case where only one branch is feasible.
The rule Br-III deals with the case where both branches are feasible.}\remark{we merge the original Br-I, Br-II, and Br-III rules into a single branching rule.}
In the branching rule, $(\mathbb{I}, \phi)$ represents the environment before a branching statement.
$(\mathbb{I}_1, \phi\land\phi_1)$ and $(\mathbb{I}_2, \phi\land\phi_2)$
are program environments we respectively infer from the two branches.
At the joining point,
we \revise{either use the analysis results of one branch if the other branch is infeasible, or} merge program environments from both branches.
\revise{When merging results from both branches, v}\delete{V}ariables assigned different values from the two branches are merged via the \textit{ite} operator.
\delete{The sets $\bM_1$ and $\bM_2$ are merged via set intersection as the value of a variable must be from the current iteration only when both branches modify the variable.
The sets $\bR_1$ and $\bR_2$ are merged via set union as a variable references a value from previous iterations if its value in one branch comes from previous iterations.}Path constraints are merged via disjunction with the common prefix pulled out.


\remark{To support formal discussion of soundness, we rewrite the remaining part of this section. The basic idea is~not changed. Old text can be found in the draft submitted before.}

\defparbf{Abstract Finite State Machine.}
\mrevise{We use a graph structure~to represent an FSM.
That is, an FSM is a set of labeled edges. Each edge is a triple $(S, \bE_S, S')$ where  = $\bE_S = (\bI_S, \phi_S)$,
meaning a transition from the state $S$ to the state $S'$ with the transition constraint $\phi_S[\mathbb{I}_S(\kappa)/\kappa]$. In the triple, $\bE_S$ is the resulting program environment after analyzing the path set $S$ in a loop iteration. Next, 
we formally describe the other two key procedures, i.e., \texttt{split} and \texttt{merge}, in Algorithm~\ref{alg:framework}.}

\defparit{(1) Splitting Rules (SR1-3).}
Splitting a state consists of two steps ---
splitting the path set the state represents and recomputing its outgoing program environment.

SR1 splits two overlapping path sets $S_1$ and $S_2$ into at~most three subsets,
respectively represented by
$\phi_{S_1}\land\lnot \phi_{S_2}$ that means paths in the first set but not in the second,
$\phi_{S_1}\land \phi_{S_2}$ that means paths shared by the two sets,
and
$\lnot \phi_{S_1}\land \phi_{S_2}$ that means paths not in the first set but in the second.
We create a state for each of the three skeletal constraints if it is satisfiable.
SR2 and SR3 isolate some special paths from a path set.
Given the path set $S_1$ and the paths $S_2$ to isolate,
we create two states represented by $\phi_{S_1}\land\lnot \phi_{S_2}$ and $\phi_S\land\phi_{S_2}$, respectively. 

\begin{algorithm}[t]\footnotesize
    \caption{Splitting Rules (SR1-3).}
    \label{alg:splitting}
    \SetKwFunction{Split}{split}
    \SetKwProg{Proc}{Procedure}{}{}
    \Proc{\Split{$(S_1, \bE_{S_1}, S_2), (S_2, \bE_{S_2}, S_3)$}}{
        \textbf{assume} $\bE_{S_1} = (\bI_{S_1}, \phi_{S_1})$ and $\bE_{S_2} = (\bI_{S_2}, \phi_{S_2})$\;
        \textbf{assume} $S_2$ is split into two sub-states $S_{21}$, $S_{22}$, $\dots$\;
        \textbf{let} $\Phi_{S_{2i}} = \phi_{S_{2i}}[\mathbb{I}_{S_2}(\kappa)/\kappa]$\;
        \textbf{let} $\bI_{S_{2i}} = \bI_{S_2}[\textup{simplify}(\tilde{v}, \Phi_{S_{2i}})/\tilde{v}]$\; 
        \textbf{let} $\bE_{S_{2i}} = (\bI_{S_{2i}}, \phi_{S_{2i}})$\;
        {\fontdimen2\font=1pt replace input transitions with} $(S_1, \bE_{S_{1}}, S_{2i}), (S_{2i}, \bE_{S_{2i}}, S_3)$\;
    }
\end{algorithm}

After a state is split into multiple sub-states,
we recompute the outgoing program environment for each sub-state. 
\mrevise{Algorithm~\ref{alg:splitting} and Figure~\ref{fig:splitting} show the splitting procedure,
where we assume we split the state $S_2$ into multiple sub-states $S_{2i}$
and split its outgoing transition $(S_2, \bE_{S_2}, S_3)$ into $(S_{2i}, \bE_{S_{2i}}, S_3)$.
The splitting procedure consists of two steps.
First, Line~4 in Algorithm~\ref{alg:splitting} computes the real path constraint according to the skeletal path constraint of each sub-state.
Second, Line~5 recomputes each abstract value under the new path constraint.
Basically, this step is to remove values from unreachable branches.
For instance, assume $\bI_{S_2}(v) = \textup{ite}(\tilde{v}_1, \tilde{v}_2, \tilde{v}_3)$,
meaning that after analyzing the path set $S_2$, the abstract value of the variable $v$
is either $\tilde{v}_2$ or $\tilde{v}_3$, depending on if the branching condition $\tilde{v}_1$ is true.
If paths in the subset $S_{21}$ ensures $\tilde{v}_1 = \textup{true}$,
we then rewrite the abstract value as $\bI_{S_{21}}(v) = \tilde{v}_2$.}

\defparit{(2) Merging Rules (MR1).}
MR1 merges two equivalent states.
\mstartreviseblock
Lines~13-14 of Algorithm~\ref{alg:framework} implements this rule.
We show the idea in Figure~\ref{fig:mr1},
where we assume $S_1'\equiv S_1$ and $\bE_{S_1'}\equiv\bE_{S_1}$.
In this case, we merge $S_1$ and $S_1'$,
but do not compute the next states using $\bE_{S_1'}$
because we have already computed them using its equivalent counterpart $\bE_{S_1}$.
Thus, Algorithm~\ref{alg:framework} does not add $(S_1', \bE_{S_1'})$ to the worklist
at Lines~13-14.
\mdismissreviseblock

\begin{figure}[t]
    \centering
    \includegraphics[width=\linewidth]{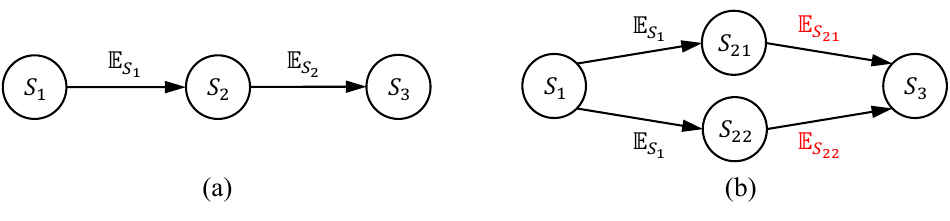}
    \vspace{-8mm}
    \caption{\revise{SR1-3. (a) Before splitting. (b) After splitting.}}\label{fig:splitting}
    \vspace{-4mm}
\end{figure}

\begin{figure}[t]
    \centering
    \includegraphics[width=\linewidth]{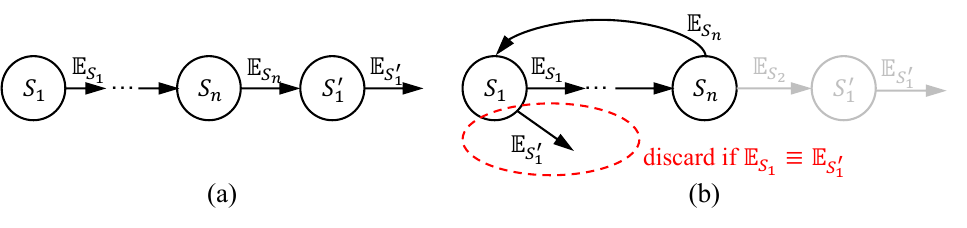}
    \vspace{-8mm}
    \caption{\revise{MR1. (a) Before merging. (b) After merging.}}\label{fig:mr1}
    \vspace{-2mm}
\end{figure}

\begin{figure}[t]
    \centering
    \includegraphics[width=\linewidth]{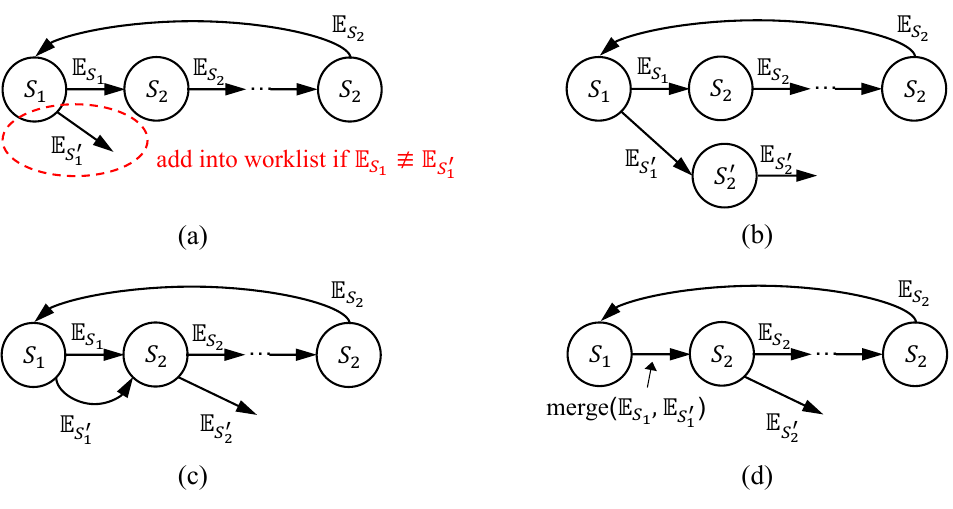}
    \vspace{-8mm}
    \caption{\revise{MR2. (a) Add $(S_1, \bE_{S_1'})$ into worklist if $\bE_{S_1'}\not\equiv\bE_{S_1}$. (b) Generate $(S_2', \bE_{S_2'})$ using $\bE_{S_1'}$ as the precondition. (c) Merge $S_2$ and $S_2'$. (d) Merge transitions between $S_1$ and $S_2$.}}\label{fig:mr2}
    \vspace{-4mm}
\end{figure}

\begin{algorithm}[t]\footnotesize
    \caption{Merging Rules (MR2).}
    \label{alg:mr2}
    \SetKwFunction{Merge}{merge}
    \SetKwFunction{Interval}{interval}
    \SetKwFunction{Widen}{widen}
    \SetKwProg{Proc}{Procedure}{}{}
    \Proc{~\Merge{$\bE_1$, $\bE_2$}}{
        \textbf{assume} $\bE_{1} = (\bI_{1}, \phi)$ and $\bE_{2} = (\bI_{2}, \phi)$\;
        \textbf{let} $\Phi_{S_1} = \phi_{S_1}[\mathbb{I}_{S_1}(\kappa)/\kappa]$;
        \enskip $\bI_{1} = \bI_{1}[\Interval(\tilde{v}, \Phi_{S_1})/\tilde{v}]$\;
        \textbf{let} $\Phi_{S_2} = \phi_{S_2}[\mathbb{I}_{S_2}(\kappa)/\kappa]$;
        \enskip $\bI_{2} = \bI_{2}[\Interval(\tilde{v}, \Phi_{S_2})/\tilde{v}]$\;
        \ForEach{$v$ \textup{such that} $\bI_{1}(v) = \tilde{v}_1~\land~\bI_{2}(v) = \tilde{v}_2$}{
            \textbf{let} $\bI(v) = \textup{widen}(\tilde{v}_1, \tilde{v}_2)$\;
        }
        \Return $(\bI, \phi)$\;
    }
    \Proc{~\Interval{$\tilde{v}$, $\Phi$}}{
        \textbf{let} $c_{min} =$ minimize $\tilde{v}$ with respect to $\Phi$ by SMT solver\;
        \textbf{let} $c_{max} =$ maximize $\tilde{v}$ with respect to $\Phi$ by SMT solver\;
        \Return int($c_{min}$, $c_{max}$)\;
    }
    \Proc{~\Widen{\textup{int(}$a_1$, $b_1$\textup{)}, \textup{int(}$a_2$, $b_2$\textup{)}}}{
        \textbf{let} $c_1 =$ ite($a_1>a_2$, $-\infty$, $a_1$); \enskip
        \textbf{let} $c_2 =$ ite($b_1<b_2$, $+\infty$, $b_1$)\;
        \Return int($c_{1}$, $c_{2}$)\;
    }
\end{algorithm}

\defparit{(3) Merging Rules (MR2).}
MR2 merges two states that represent the same path sets but have non-equivalent outgoing program environments.
\startreviseblock
Let us consider the example in Figure~\ref{fig:mr2} to understand how Algorithm~\ref{alg:framework} deals with this case.
Figure~\ref{fig:mr2}(a) is the same as Figure~\ref{fig:mr1}(b) except that we assume $\bE_{S_1'}\not\equiv\bE_{S_1}$.
In this situation, we add $(S_1, \bE_{S_1'})$ to the worklist (see Lines~13-14 in Algorithm~\ref{alg:framework}).
When $(S_1, \bE_{S_1'})$ is popped out, we will perform abstract interpretation using $\bE_{S_1'}$ as the initial program environment (see Lines~5-6 in Algorithm~\ref{alg:framework}).
Assume the abstract interpretation produces $(S_2', \bE_{S_2'})$ where $S_2'\equiv S_2$ as illustrated in Figure~\ref{fig:mr2}(b).
In Figure~\ref{fig:mr2}(c), we merge $S_2$ and $S_2'$, yielding multiple non-equivalent transitions between $S_1$ and $S_2$.
Lines~16-19 in Algorithm~\ref{alg:framework} merge such transitions, yielding Figure~\ref{fig:mr2}(d).
If the merged environment, i.e., $\textup{merge}(\bE_{S_1}, \bE_{S_1'})$ equals $\bE_{S_1}$ or $\bE_{S_1'}$,
we do not add $(S_1, \textup{merge}(\bE_{S_1}, \bE_{S_1'}))$ to the worklist because the resulting transition $(S_1, \bE_{S_1}, S_2)$ or $(S_1, \bE_{S_1'}, S_2)$ has been in the FSM.
Otherwise, the pair $(S_1, \textup{merge}(\bE_{S_1}, \bE_{S_1'}))$ will be added to the worklist for further computation.

A na\"ive merging procedure is shown in Algorithm~\ref{alg:mr2},
which utilizes the traditional interval abstract domain to guarantee soundness and convergence.
Lines~3-4 convert each abstract value to an interval, int($c_{min}$, $c_{max}$), by
solving two optimization problems via an SMT solver.
Basically, solving the optimization problems respectively produces the minimum and maximum solutions, $c_{min}$ and $c_{max}$, of the abstract value $\tilde{v}$ with respect to the path constraint.
Lines~5-6 merge the interval values via the traditional widening operator~\cite{cousot1977abstract}.
As proved by Cousot and Cousot~\cite{cousot1977abstract},
the widening operator ensures convergence and soundness,
which, in our context, means that it ensures the convergence and soundness of computing a fixed-point transition between two states.
Nonetheless, the na\"ive merging procedure could result in a significant loss of precision because both the computation of intervals (Lines 3-4) and the merging of intervals (Lines 5-6) over-approximate each abstract value.
Thus, before using the interval abstract domain to merge transitions,
we always try an induction-based solution, which is discussed below. 

The induction-based solution is sound and does not lose precision~\cite{angluin1983inductive}. In the solution,
we delay the transition merging operation until the number of transitions between a pair of states reaches a predefined constant.
For instance, 
in Figure~\ref{fig:mr2-induction}(a),
we do not merge transitions until the number of transitions between each pair reaches 3.
Given a list of transitions between a pair of states,
we can then perform the inductive inference in two steps -- guess and check.
For instance, 
in Figure~\ref{fig:mr2-induction}(a),
assume $\bI_{11}(v) = \sigma + 1$,  $\bI_{12}(v) = \sigma + 2$ and $\bI_{13}(v) = \sigma + 3$.
As shown in Figure~\ref{fig:mr2-induction}(b),
we then inductively ``guess'' the $k$th abstract value of the variable $v$ as $\bI_{1k}(v) = \sigma + k$.
To check the correctness of $\bI_{1k}(v) = \sigma + k$,
as shown in Figure~\ref{fig:mr2-induction}(c),
we rerun the abstract interpretation using $\bE_{S_{nk}}$ as the initial program environment,
if in the resulting program environment, the abstract value of $v$ is $\sigma + (k+1)$,
it means the summarized value $\bI_{1k}(v) = \sigma + k$ is correct.
This guess-and-check procedure follows the procedure of mathematical induction~\cite{schmidt1973introduction}
and, thus, is correct.
\dismissreviseblock

\begin{figure}[t]
    \centering
    \vspace{-2mm}
    \includegraphics[width=\linewidth]{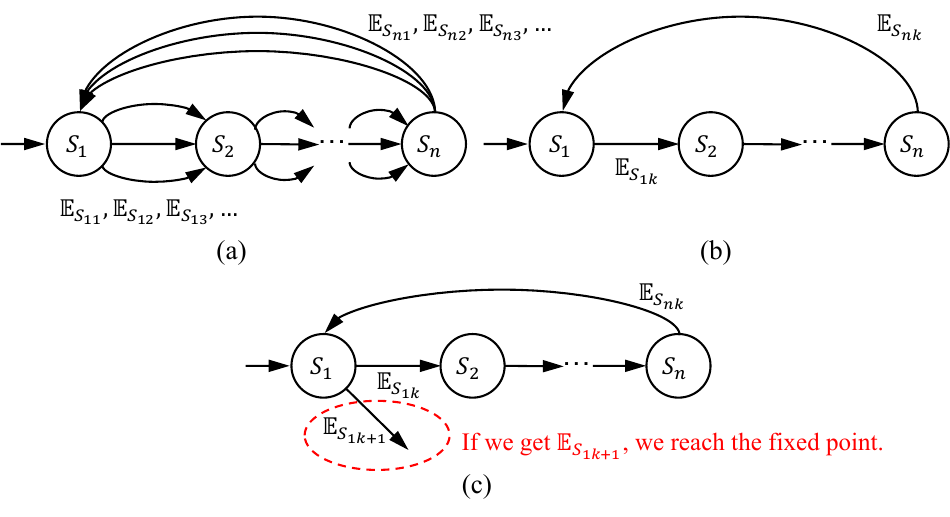}
    \vspace{-8mm}
    \caption{\revise{MR2 via induction. $\bE_{S_{ij}} = (\bI_{S_{ij}}, \phi_{S_i})$. (a) Delay merging. (b) Guess. (c) Fixed-point computation.}}\label{fig:mr2-induction}
    \vspace{-4mm}
\end{figure}

\begin{algorithm}[t]\footnotesize
    \caption{Merging Rules (MR3).}
    \label{alg:mr3}
    \SetKwFunction{Merge}{merge}
    \SetKwProg{Proc}{Procedure}{}{}
    \Proc{\Merge{$(S_1, \bE_{S_1}, S_2), (S_2, \bE_{S_2}, S_3)$}}{
        \textbf{assume} $\bE_{S_1} = (\bI_{S_1}, \phi_{S_1})$ and $\bE_{S_2} = (\bI_{S_2}, \phi_{S_2})$\;
        \textbf{let} $\Phi_{S_1} = \phi_{S_1}[\mathbb{I}_{S_1}(\kappa)/\kappa]$;
        \enskip $\Phi_{S_2} = \phi_{S_{2}}[\mathbb{I}_{S_2}(\kappa)/\kappa]$\;
        \textbf{let} $\Phi_{S_1} = \textup{simplify}(\Phi_{S_1}, \Phi_{S_{2}})$;
        \enskip $\Phi_{S_{2}} = \textup{simplify}(\Phi_{S_{2}}, \Phi_{S_1})$\;
        \textbf{if }{$\Phi_{S_{2}}$ \textup{does not use any symbol} $\tau$~} \textbf{then} {\Return\;}
        \textbf{assume} $\Phi_{S_{1}} = f(\sigma^k)$\;
        \If{$\Phi_{S_{2}} = g(\sigma^l) \land h(\tau^m)$} {
            \textbf{let} $\Phi_{S_1} = f(\sigma^k) \land h(\tau^m)[\sigma^k_{i-m+k}/\tau^m_{i\ge m-k}][\tau^{m-k}_i/\tau^m_{i<m-k}]$\;
            \textbf{let} $\Phi_{S_2} = g(\sigma^l)$\;
        }
        \ElseIf{$\Phi_{S_{2}} = g(\sigma^l) \lor h(\tau^m)$} {
            split the state $S_2$ as shown in Figure~\ref{fig:mr3}(c-d) and
            recursively call this procedure.
        }
        \Else{
            \textbf{let} $\Phi_{S_1} = f(\sigma^k)[\sigma^{k+l}_i/\sigma^k_i];\enskip \Phi_{S_2} = g(\sigma^l, \tau^m)[\sigma^{k+l}_{k+i}/\sigma^l_i]$\;
            \If{$m\ge k$} {
            \textbf{let} $\Phi = \Phi_{S_1} \land \Phi_{S_{2}}[\sigma^{k+l}_{i-m+k}/\tau^m_{i\ge m-k}][\tau^{m-k}_i/\tau^m_{i<m-k}]$\;
            }
            \textbf{else\enskip let} $\Phi = \Phi_{S_1} \land \Phi_{S_{2}}[\sigma^{k+l}_{k-m+i}/\tau^m_{i}]$\;
            merge transitions into one from $S_1$ to $S_3$ constrained by $\Phi$\;
        }
    }
\end{algorithm}

\begin{figure}[t]
    \centering
    \includegraphics[width=\linewidth]{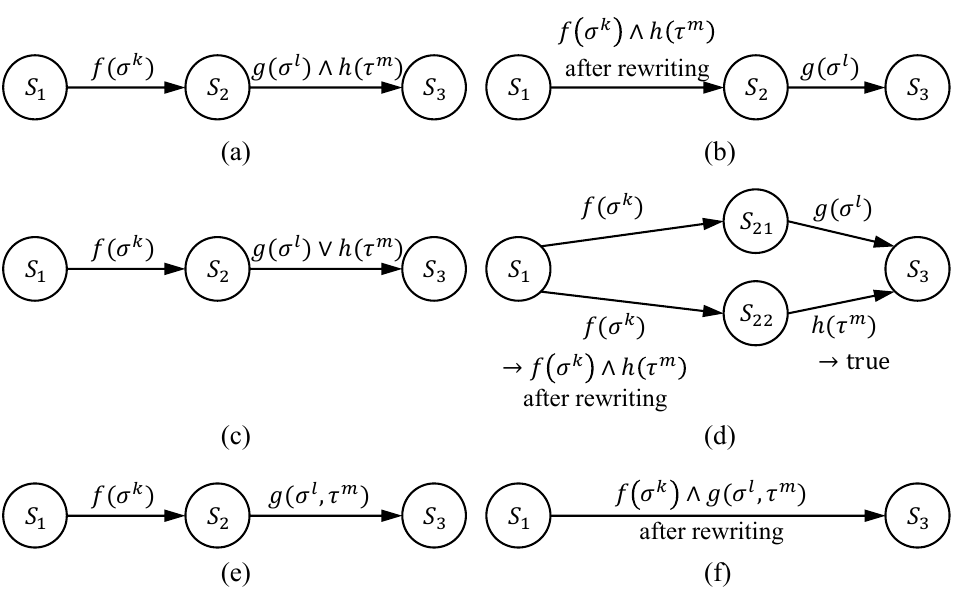}
    \vspace{-6mm}
    \caption{\revise{MR3. Eliminating $\tau$ in (a-b) conjunctive constraints, (c-d) disjunctive constraints, and (e-f) constraints where $\tau$ cannot be isolated by disjunction or conjunction.}}\label{fig:mr3}
    \vspace{-2mm}
\end{figure}

\defparit{(4) Merging Rules (MR3).}
\mrevise{MR3 ensures the validity of FSM by eliminating state transitions that refer to inputs consumed by previous transitions. It is performed after an FSM is produced by Algorithm~\ref{alg:framework}.
Algorithm~\ref{alg:mr3} and Figure~\ref{fig:mr3}
demonstrate how it works on two transitions,
one is from the state $S_1$ to the state $S_2$ and consumes $k$ bytes, i.e., $\sigma^k$;
the other is from the state $S_2$ to the state $S_3$, consumes $l$ bytes, i.e., $\sigma^{l}$, and, meanwhile, constrains $m$ bytes consumed by previous transitions, i.e., $\tau^{m}$.
First, for conjunctive constraints, e.g., $g(\sigma^l)\land h(\tau^m)$ in Figure~\ref{fig:mr3}(a),
we only need to move the constraint $h(\tau^m)$ to the previous transition
and perform constraint rewriting.
Such rewriting does not change the semantics of the transition constraint but just lets it follow the definitions of $\sigma$ and $\tau$.
Second, for disjunctive constraints, e.g., $g(\sigma^l)\lor h(\tau^m)$ in Figure~\ref{fig:mr3}(c),
we split the state $S_2$ to eliminate the disjunctive operator as shown in Figure~\ref{fig:mr3}(d)
and then use the method for conjunction discussed above.
Third, for constraints that cannot isolate $\tau$-related sub-formulas via disjunction or conjunction,
as shown in Figure~\ref{fig:mr3}(f), we merge the transitions into one.}

\startreviseblock
\begin{theorem}[Soundness and Completeness]\label{th:sound}
    Given a program in the language defined in Figure~\ref{fig:design_lang},
    Algorithm~\ref{alg:framework} is sound using the aforestated splitting and merging rules.
        It is complete if the interval domain is never used during the analysis. 
\end{theorem}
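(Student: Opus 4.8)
The plan is to characterize both objects by the languages they accept --- $L_\mathcal{P}$, the set of byte sequences on which the parser of Figure~\ref{fig:design_lang} reaches an $\textit{exit}()$ statement, and $L_\textit{FSM}$, the set accepted along transitions from a start state to a final state of the inferred machine --- and then to read soundness as $L_\mathcal{P}\subseteq L_\textit{FSM}$ (no valid format is missed, i.e. recall) and completeness as $L_\textit{FSM}\subseteq L_\mathcal{P}$ (no spurious format is introduced, i.e. precision). Since Theorem~\ref{th:convergence} already guarantees termination, the final FSM is a well-defined fixed point, and I would prove the two inclusions over it. First I would establish the local soundness of the transfer functions in Figure~\ref{fig:semantics_basic} by structural induction on statements: each rule updates $(\mathbb{I},\phi)$ so that every concrete store reachable by one loop iteration along a concrete path is represented by $\mathbb{I}(v)$ under the real path constraint $\Phi=\phi[\mathbb{I}(\kappa)/\kappa]$. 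The only rule that could weaken the store is \textbf{Branching}, whose $\textup{ite}$-merge is exact for the abstract domain, so this base layer is in fact both sound and exact.

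Next I would discharge each rule as an invariant preserved during the worklist computation. For the splitting rules I would show $S_1$ decomposes as the disjoint union of the subsets $\phi_{S_1}\land\lnot\phi_{S_2}$, $\phi_{S_1}\land\phi_{S_2}$, $\lnot\phi_{S_1}\land\phi_{S_2}$ for SR1, and analogously for SR2 and SR3, so no concrete path leaves the represented set; the recomputation in Algorithm~\ref{alg:splitting} only prunes values from branches made infeasible by the refined constraint, hence is equivalence-preserving. For MR1 the merged states are equivalent by hypothesis, so the accepted language is unchanged. For MR3 I would argue that the conjunctive and disjunctive cases of Algorithm~\ref{alg:mr3} are pure constraint rewritings preserving $\Phi_{S_1}\land\Phi_{S_2}$ (only reindexing $\sigma$ and $\tau$ per their definitions), while the ``otherwise'' case collapses an intermediate state into one transition constrained by the conjunction --- an equivalence-preserving contraction of the transition graph. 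Chaining these invariants, an induction on the number of loop iterations in a concrete parse shows every sequence in $L_\mathcal{P}$ is accepted by the final FSM, giving soundness.

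The decisive case is MR2, which I would split along the disjunct in the statement. When the interval domain is invoked, Algorithm~\ref{alg:mr2} replaces abstract values by intervals and applies the classical widening operator; by Cousot and Cousot~\cite{cousot1977abstract} this over-approximates, so it is sound but may enlarge $L_\textit{FSM}$ --- which is precisely why completeness is claimed only when the interval domain is never used. When only induction is used, I would formalize the guess-and-check of Figure~\ref{fig:mr2-induction} as an application of mathematical induction: the delayed transitions supply the base cases $\mathbb{I}_{11}(v),\mathbb{I}_{12}(v),\mathbb{I}_{13}(v)$, the guessed parametric value $\mathbb{I}_{1k}(v)$ matches them, and re-running \texttt{abstract\_interpretation} from $\bE_{S_{nk}}$ verifies the inductive step $\mathbb{I}_{1k}(v)\mapsto\mathbb{I}_{1(k+1)}(v)$; hence the summary holds for every $k$ with neither loss nor gain, so this branch is exact. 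Combining the exact base layer with exact SR1--3, MR1, MR3, and induction-only MR2 yields $L_\textit{FSM}\subseteq L_\mathcal{P}$, and with soundness, $L_\textit{FSM}=L_\mathcal{P}$.

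I expect the main obstacle to be the induction branch of MR2. Inductive generalization from finitely many observed transitions is, in isolation, an unsound heuristic, so the entire weight of completeness rests on showing that the single guess-and-check step genuinely certifies the parametric summary for all $k$: that preservation of the abstract transition relation under one more abstract iteration, together with the verified base cases, licenses the universally quantified conclusion, and that the resulting infinite family of variable-length concrete paths is matched exactly by the self-cycle the summary induces. Making this precise requires a careful induction invariant relating the parametric abstract value $\mathbb{I}_{1k}$ to the concrete store after $k$ iterations, and confirming that the check in Figure~\ref{fig:mr2-induction}(c) truly discharges the inductive step rather than merely one further instance.
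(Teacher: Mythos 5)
Your proposal is correct and follows essentially the same route as the paper's own proof: it reads soundness and completeness as the two language inclusions between the parser and the inferred FSM, establishes exactness of the transfer functions, shows splitting/MR1/MR3 are equivalence-preserving, isolates interval widening in MR2 as the sole source of over-approximation (hence the completeness caveat), and chains these by induction over loop iterations — precisely the structure of the paper's Lemmas~\ref{lemma:sound1}--\ref{lemma:completeness}. Your closing concern about rigorously discharging the guess-and-check step of induction-based MR2 is well placed, as the paper itself treats that step only briefly by appeal to mathematical induction.
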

\begin{proof}
   The proof is discussed in Appendix~\ref{app:soundness}.
\end{proof}
\dismissreviseblock

\ifdiff
\else
\newpage
\fi

\noindent
\textbf{Discussion}.
\revise{We propose a static analysis that can infer an FSM from a parsing loop.
    While it is undecidable to check if an input loop intends to implement an FSM,
    as discussed in Theorem~\ref{th:sound},
    given any loop in our abstract language,
    our approach guarantees to output a sound result.}
\revise{Nevertheless, the implementation in practice shares some common limitations with general static analysis.}
For instance,
our static analysis is currently implemented for C programs
and does not handle virtual tables in C++.
We focus on source code and do not handle inline assembly. For libraries without available source code, e.g., crc16() and md5(), which are widely used to compute checksums or encrypt messages, we manually model these APIs.
A common limitation shared with the state of the art is that,
if the code implements a wrong FSM, the FSM we infer will be incorrect, either.
Nevertheless,
we will show that our approach is promising via a set of experiments.

\section{Evaluation}\label{sec:evaluation}

On top of the LLVM compiler framework~\cite{lattner2004llvm} and the Z3~theorem prover~\cite{de2008z3},
we have implemented \tool\ for protocols written in C.
The source code of a protocol is compiled into the LLVM bitcode
and sent to \tool\ for inferring the FSM.
In \tool, 
LLVM provides facilities to manipulate the code and 
Z3 is used to represent abstract values as symbolic expressions
and solve path constraints.

\defparbf{Research Questions.}
\mdelete{In the evaluation, we focus on three research questions.}First,
\mrevise{we compare our approach to the state-of-the-art static analysis for FSM inference, i.e., \proteus~\cite{xie2016proteus,xie2019automatic}}\mdelete{we study the speed of inferring an FSM and the size of each inferred FSM by comparing our approach to the state of the arts}.
Second,
\mrevise{we compare \tool\ to dynamic techniques, including ReverX~\cite{antunes2011reverse}, AutoFormat~\cite{lin2008automatic}, and Tupni~\cite{cui2008tupni}.}\mdelete{for the precision and recall of our static analysis, we investigate how many false state transitions \tool\ reports and how many true state transitions \tool\ misses.}
Third, 
to show the security impacts,
we apply \tool\ to fuzzing and applications beyond protocols.

\defparbf{Benchmarks.}
Our approach is designed to work on the C code that implements the FSM parsing loop for regular protocols.
We do not find any existing test suite that contains such C code.
Thus, we build the test suite.
To this end,
we search the Github for regular protocols implemented in C language via the keywords, ``protocol parser'', ``command parser'', and ``message parser'',
until we found the ten in Table~\ref{tab:basics}.
These protocols include
text protocols such as ORP and binary protocols such as MAVLINK.
They are widely used in different domains in the era of the internet of things.
For example,
ORP allows a customer asset to interact with Octave edge devices.
MAVLINK is a lightweight messaging protocol for communicating with drones.
TINY specifies the data frames sent over serial interfaces such as UART and~telnet. 
SML defines the message formats for smart meters.
RDB is a protocol for communicating with Redis databases. 
MQTT is an OASIS standard messaging protocol for IoT devices.
MIDI is for musical devices and KISS is for amateur radio.  


\defparbf{Environment.}
All experiments are conducted on a Macbook Pro (16-inch, 2019)
equipped with an 8-core 16-thread Intel Core i9 CPU with 2.30GHz speed and 32GB of memory. 

\subsection{\replace{Speed and Size}{Against Static Inference Techniques}}\label{subsec:eval_static}

Our key contribution is \mreplace{to provide a practical}{a} static analysis \mreplace{to infer the parsing FSM}{that infers FSMs} 
\mreplace{by avoiding the path-explosion problem}{without suffering from path explosion}.
To show the impacts of our design,
we run both \tool\ and the state-of-the-art technique, \proteus~\cite{xie2016proteus,xie2019automatic}, against the benchmark programs on a 3-hour budget per program.
\revise{The time cost of each analysis is shown in Figure~\ref{fig:speed} in log scale.
As illustrated,
\proteus\ cannot complete many analyses within the time limit due to path explosion.
By contrast, all our analyses finish in five minutes,
exhibiting at least 70$\times$ speedup compared to \proteus.}
\revise{Since both \proteus\ and \tool\ perform path-sensitive analysis, 
they have the same precision and recall when both of them succeed in inferring the FSM for a protocol, e.g., ORP. We detail the results of precision and recall in \S\ref{subsec:eval_dynamic}}

\begin{table}[t]
    \footnotesize
    \centering
    \renewcommand{\arraystretch}{0.9}
    \caption{Sizes of the Inferred State Machines}~\\
    \label{tab:basics}
    \begin{tabular}{c|cc|cc}
        \toprule
        \multirow{2}{*}{\textbf{Protocols}} & \multicolumn{2}{c|}{\textbf{\tool}} & \multicolumn{2}{c}{\textbf{\proteus}} \\
         & \# states        & \# transitions        & \# states  & \# transitions \\
        \midrule
        ORP~\cite{orp}               &       5           &          8             &    42        &      92          \\
        MAVLINK~\cite{mavlink}           &       42           &          197             &      -      &     -           \\
        IHEX~\cite{ihex}              &             15     &           63            &         -   &        -        \\
        BITSTR~\cite{bitstring}              &     22             &           75            &       -     &      -          \\
        TINY~\cite{tiny}              &         14         &             54          &       151     &      872          \\
        SML~\cite{sml}               &        32          &           89            &     -       &         -       \\
        MIDI~\cite{midi}              &           19       &              81         &     765       &       3812         \\
        MQTT~\cite{mqtt}              &        28          &        87               &    105        &      581          \\
        RDB~\cite{rdb}               &         22         &         57              &        -    &            -    \\
        KISS~\cite{kiss}              &          6        &           12            &         24   &          142     \\
        \bottomrule
    \end{tabular}
\end{table}

\begin{figure}[t]
    \centering
    \includegraphics[width=0.76\columnwidth]{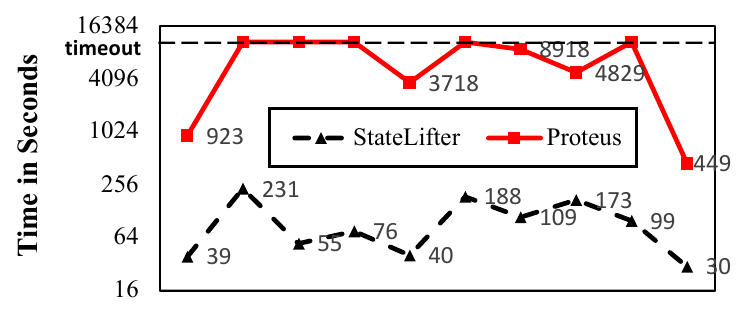}
    \vspace{-2mm}
    \caption{Time cost. The X-axis lists the ten protocols.}\label{fig:speed}
    \vspace{-2mm}
\end{figure}

Table~\ref{tab:basics} shows the size of each inferred FSM by \tool\ and \proteus.
Observe that the FSMs inferred by our approach are much ($4\times$-$40\times$) smaller than those inferred by \proteus.
It demonstrates that our design not only significantly mitigates the path explosion problem but also infers highly compressed FSMs, which can be expected to be easier to use in practice.

\subsection{\replace{Precision and Recall}{Against Dynamic Inference Techniques}}\label{subsec:eval_dynamic}

\replace{To evaluate the precision and recall of our approach,
    we manually compared every FSM we generate with the ground truth, i.e., the official specifications.
    The precision is measured by
    the ratio of the correctly inferred state transitions to all inferred state transitions in each FSM.
    The recall is measured by
    the ratio of the correctly inferred state transitions to the total number of transitions established according to the ground truth.}{Dynamic analysis is orthogonal to static analysis. Thus, in general, they are not comparable. Nevertheless, for the purpose of reference rather than comparison, we evaluate three dynamic analyses, including ReverX~\cite{antunes2011reverse}, AutoFormat~\cite{lin2008automatic}, and Tupni~\cite{cui2008tupni}.
    ReverX is a black-box approach that learns an FSM from input messages without analyzing the code.
    It instantiates general automata induction techniques like L*~\cite{angluin1987learning} and is specially designed for protocol format inference.
    Auto-Format and Tupni are white-box approaches that rely on dynamic dataflow analysis.
    They generate message formats in BNF, which can be easily converted to FSMs.
    Given that all analyses can complete within a few minutes, our focus is primarily on examining their precision and recall.
    In Appendix~\ref{app:compare}, we discuss the details of how we compute precision and recall.
    Intuitively,
the precision is the ratio of correct state transitions to all inferred transitions;
and the recall is the ratio of correct state transitions to all transitions in the ground truth.}

\mdelete{For reference,
we also run two network protocol reverse engineering techniques, 
AutoFormat and Tupni, against the benchmark protocols.
As discussed before,
mainstream approaches in the field, including AutoFormat and Tupni, are dynamic program analyses rather than static analyses and lack strategies to handle FSM parsing loops.
Nevertheless, this experiment is designed only for reference but not for a comparative purpose as
dynamic program analysis and static program analysis work under different assumptions.
For instance,
to make AutoFormat and Tupni work, following their original works,}\mrevise{To drive the dynamic analyses,}
we randomly generate one thousand valid messages as their inputs. 
By contrast, our static analysis does not need any inputs and, thus, provides a promising alternative to the state of the art\mdelete{s} especially when the input quality cannot be guaranteed.
The precision and recall of the \mreplace{three reverse-engineering tools}{inferred FSMs} are plotted in Figure~\ref{fig:precision_recall}.
It shows that we achieve over 90\% precision and recall
while the others \mdelete{are much lower.
They }often generate over 40\% false or miss 50\% true transitions\mrevise{.}
\mreplace{as}{This is because} they depend on a limited number of input messages and cannot handle FSM parsing loops well.
\tool\ also reports a few false transitions or misses some true ones
as it inherits some general limitations of static analysis (see \S\ref{sec:approach}).

\begin{figure}[t]
    \centering
    \includegraphics[height=1.1in]{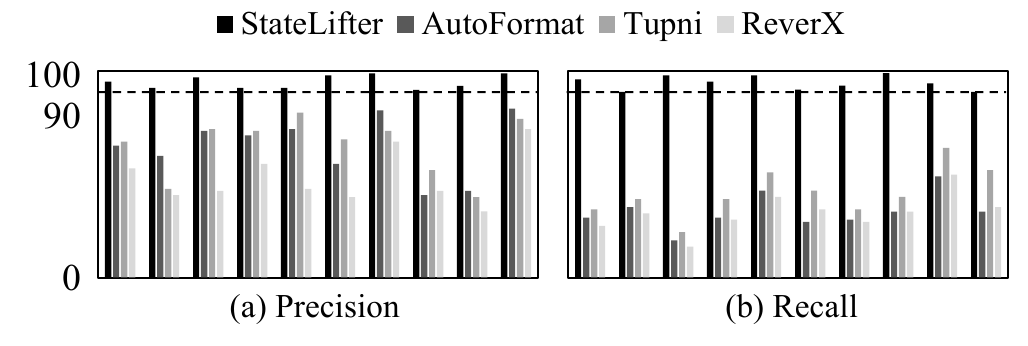}
    \vspace{-4mm}
    \caption{Precision and recall. \mrevise{X-axes list the ten protocols.} \remark{updated by adding ReverX}}\label{fig:precision_recall}
    \vspace{-4mm}
\end{figure}

\subsection{Security Applications}

\mdelete{To show the security impact,
we leverage the inferred FSMs to improve two typical protocol fuzzers, AFLNet and BooFuzz.
The former is a mutation-based fuzzer and the latter is a generation-based fuzzer.
We also discuss applications of our approach to other domains beyond network protocols.}\mremark{remove to save space}

\ifdiff
\smallskip
\else
\fi
\noindent\textbf{Protocol Fuzzing.}
AFLNet~\cite{pham2020aflnet} accepts a corpus of valid messages as the seeds 
and employs a lightweight mutation~method. 
Thus, 
we create a seed
corpus, where each message is generated by solving the transition constraints in the FSMs.
\mdelete{For }BooFuzz~\cite{boofuzz}\mdelete{, since it} directly accepts the message formats as its input and automatically generates messages\mreplace{ for fuzzing,}{. Thus,}
we respectively input the formats inferred by \tool, \revise{ReverX,} AutoFormat, and Tupni to BooFuzz.
The experiments are performed on a 3-hour budget and repeated 20 times to avoid random factors.
\mreplace{The results are}{As} shown in Figure~\ref{fig:fuzz}\mreplace{.
S}{, s}ince we can provide more precise and complete \mreplace{specifications}{formats},
fuzzers enhanced by \tool\ achieve \replace{$1.2\times$-$3.1\times$}{$1.2\times$-$3.3\times$} coverage.
\mreplace{The fuzzers enhanced by our approach}{Meanwhile, we} detect \mreplace{a dozen of}{twelve} zero-day bugs while the others
detected only two of them. \mrevise{We provide an example of detected bugs in Appendix~\ref{app:bugs}.}
All detected bugs are exploitable as they can be triggered via crafted messages.
\revise{Thus, they may pose a notable threat to software security in the industry. For example, we identified four vulnerabilities in the official implementation of ORP~\cite{orp}, which is commonly used for connecting Octave edge devices to the cloud~\cite{octave}.}

\begin{figure}[t]
    \centering
    \includegraphics[height=1.1in]{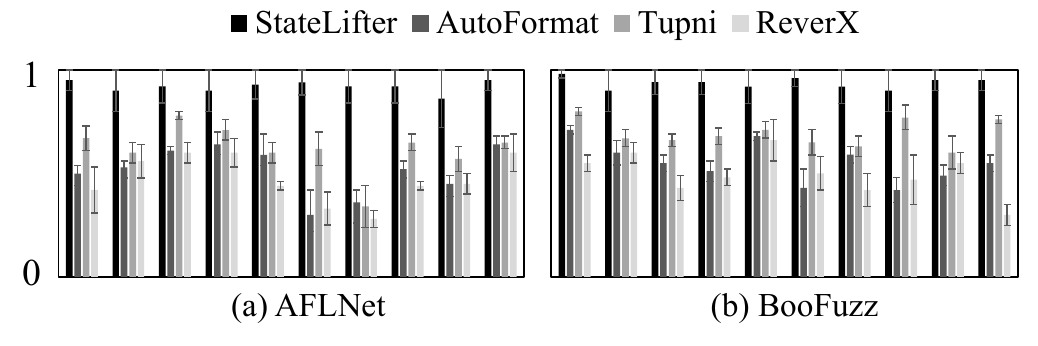}
    \vspace{-4mm}
    \caption{\mrevise{X-axes list the ten protocols.} Y-axes are coverage normalized to one with a 95\% confidence interval. \remark{updated by adding ReverX}}
    \label{fig:fuzz}
    \vspace{-1mm}
\end{figure}

\mdelete{Figure~\ref{fig:bug} shows a global-buffer-overflow bug detected in the protocol SML.
The parsing loop calls the function \textit{smlState} with an input byte
and updates the parsing state according to the byte.
In each parsing iteration, it pushes at most two bytes into the global buffer \textit{listBuffer}, of which the maximum length is 80.
This means that to trigger the bug in the function \textit{smlOBISManufacturer}, a message has to pass at least 40 iterations of the parsing loop.
In other words, the bug-triggering message must pass at least 40 state transitions in an inferred FSM.
This requires the inferred FSM to be of high precision and recall. Otherwise, ill-formed messages will be generated,
which are easy to be pruned and cannot execute deep program paths.
As discussed in \S\ref{sec:motivating_example}, conventional approaches cannot handle FSM parsing loops well. Hence, fuzzers armed with them miss this bug.}\mremark{move to the appendix to save space.}

%
\newpage
%

\section{Related Work}\label{sec:relatedwork}

\mdelete{In \S\ref{sec:problem_scope} and \S\ref{sec:motivating_example},
we have discussed how our approach is different from related work on dynamic-analysis-based reverse-engineering techniques
and static loop analyses. In what follows,
we discuss other two groups of related work.}\mremark{remove to save space}

\ifdiff
\smallskip
\else
\fi
\noindent
\textbf{Static Analysis for Protocol Reverse Engineering.}
While almost all existing works for inferring message formats use dynamic analysis,
Lim et al.~\cite{lim2006extracting} proposed a static analysis that is different from \tool\ in two aspects.
First, it infers the formats of output messages whereas we focus on received messages.
Second,
it cannot handle loops that implement complex state machines and all loops are assumed to process repetitive fields in a message.
\mreplace{In our analysis, we do not have this assumption and a parsing loop may process all fields in a message.}{\tool\ does not assume this.}
Rabkin and Katz~\cite{rabkin2011static} statically infer input formats in key-value forms, particularly for program configuration rather than networks.
\revise{Shoham et al.~\cite{shoham2008static} infer valid API sequences rather than message formats as state machines.}
Existing static analysis \mreplace{in the field of}{for} reverse engineering focuses on security protocols,
which, different from message formats, infers an agreed sequence of actions performed by multiple entities~\cite{avalle2014formal}.

\defparbf{Applications of Protocol Reverse Engineering.}
\mreplace{Revere engineering of m}{Formal m}essage formats \mreplace{is}{are} important for protocol fuzzing.
Mutation-based fuzzers use formats to generate the seed corpus~\cite{pham2020aflnet,gascon2015pulsar,gorbunov2010autofuzz,hu2018ganfuzz,chen2018iotfuzzer,somorovsky2016systematic}.
Generation-based fuzzers directly use the formats to generate messages for testing~\cite{boofuzz,peach,sulley,fuzzowski,banks2006snooze}.
Protocol model checking and verification also need formal protocol specifications~\cite{basin2018formal,benjamin2015messy,bhargavan2017verified,bishop2005rigorous,bishop2006engineering,blanchet2016modeling,cremers2018component,meier2013tamarin,musuvathi2004model,udrea2006rule}.
Blanchet~\cite{blanchet2016modeling} specifies a protocol by Horn clauses
and applies their technique to verify TLS models~\cite{bhargavan2017verified}.
Beurdouche et al.~\cite{benjamin2015messy} use Frama-C~\cite{kirchner2015frama}
to verify TLS implementations.
Tamarin~\cite{meier2013tamarin} uses a domain-specific language to establish proofs for security protocols
and applies to 5G AKA protocols~\cite{basin2018formal,cremers2018component}.
\mreplace{There are also a few}{Some} works \mreplace{on verifying}{verify} TCP components via symbolic analysis~\cite{bishop2005rigorous,bishop2006engineering,musuvathi2004model}.
Udrea et al.~\cite{udrea2006rule} use a rule-based static analysis to identify problems in protocols.
All these works assume the existence of formal specifications or manually build them.
We push forward the study of automatic specification inference and can infer message formats with high precision, recall, and speed.
\section{Conclusion}\label{sec:conclusion}

We present a \mdelete{practical }static analysis \mreplace{for inferring}{that infers} \mrevise{an FSM to represent} the format of regular protocols.
\mrevise{We significantly mitigate the path-explosion problem via carefully designed path merging and splitting rules.}
\mdelete{It is different from the state of the art\delete{s} in two aspects.
First, as static analysis, it does not rely on any network messages and overcomes the coverage problem of the mainstream dynamic-analysis-based approaches.
Second, 
instead of segmenting a message into multiple fields,
we lift protocol implementation to finite state machines, which provide a new perspective of understanding message formats in reverse engineering.}\mremark{remove to save space}Evaluation shows that our approach achieves high precision, recall, and speed.
Fuzzers supported by our work can achieve high coverage and discover zero-day bugs.
\newpage

%

\balance
\bibliographystyle{plain}
\bibliography{sigproc}

\newpage
\appendix

\startreviseblock

\section{\revise{Soundness and Completeness}}\label{app:soundness}

\setcounter{algocf}{0}
\myalg{\label{alg:copy}}

To facilitate the discussion and understanding of soundness and completeness,
we put a copy of our algorithm, i.e., Algorithm~\ref{alg:framework}, on top of this page.
The algorithm uses a worklist for fixed-point computation.
The worklist is a set of $(S, \bE_S)$ such that $S$ is a path set that we analyze in a loop iteration
and $\bE_S=(\bI_S, \phi_S)$ is the resulting program environment.
In the algorithm,
whenever we create a new $(S, \bE_S)$ (Line~14), or $(S, \bE_S)$ in the FSM does not reach the fixed point (Line~19),
we add it to the worklist.
Given each item $(S, \bE_S)$ popped from the worklist,
we create a state transition $(S, \bE_S, S')$.
Hence,
in what follows, we prove Theorem~\ref{th:sound} in three steps,
respectively
proving (1) the soundness/completeness of items in the worklist, i.e., $(S, \bE_S)$,
(2) the soundness/completeness of state transitions, i.e., $(S, \bE_S, S')$,
and (3) the soundness/completeness of the FSM, which is a set of transitions.

\begin{lemma}[Soundness of $(S, \bE_S)$]\label{lemma:sound1}
    For each variable $v$, $\bI_S(v)$ returns a sound abstract value that over-approximates all possible concrete values of the variable $v$.
\end{lemma}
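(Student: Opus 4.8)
The plan is to prove soundness by structural induction over the statements in the abstract language of Figure~\ref{fig:design_lang}, showing that each transfer function in Figure~\ref{fig:semantics_basic} preserves the over-approximation invariant. The invariant I want to maintain is: for every variable $v$, every concrete value that $v$ may hold at a given program point under some concrete execution consistent with the current path constraint is represented by the abstract value $\bI_S(v)$. First I would establish the base case, namely that the initial environment $\bE_\textit{init}$ is sound --- every variable is mapped either to its known initialization constant or to a fresh input symbol $\sigma^k$/$\tau^k$ that, by definition, ranges over all possible byte values. The \textbf{Init} rule merely renames $\sigma$ to $\tau$ to reflect that bytes read in the previous iteration are now historical; this renaming is a bijection on concrete traces, so it preserves soundness.

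For the inductive step I would treat each rule in turn. The \textbf{Assign} and \textbf{Binary} rules propagate abstract values compositionally: if $\tilde v_2$ (and $\tilde v_3$) over-approximate the concrete values of $v_2$ (and $v_3$), then $\tilde v_2$ and $\tilde v_2 \oplus \tilde v_3$ over-approximate the assigned value, since $\oplus$ is interpreted faithfully in the abstract domain. The \textbf{Read} rule introduces a fresh input byte $\sigma^{k+1}_k$, which trivially over-approximates an unconstrained input. The \textbf{Sequencing} rule is immediate from the induction hypothesis applied twice. The key case is \textbf{Branching}: when only one branch is feasible (the two \textup{unsat} cases), soundness follows from the induction hypothesis on that branch; when both branches are feasible, each variable is mapped to $\textup{ite}(\tilde v, \tilde u_1, \tilde u_2)$, which by construction contains every concrete value reachable through either branch, so the join over-approximates the union --- the crucial point being that the \textup{ite} join is exact rather than a lossy widening, and the path constraint is merged by disjunction $\phi\land(\phi_1\lor\phi_2)$, keeping all feasible paths.

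Beyond the raw transfer functions I must also verify that the \texttt{split} operation (Algorithm~\ref{alg:splitting}) preserves the invariant. Splitting only refines the path constraint and then applies $\textup{simplify}(\tilde v, \Phi_{S_{2i}})$, which by the stated equivalence property $\Phi_{S_{2i}} \Rightarrow (\tilde v \equiv \textup{simplify}(\tilde v, \Phi_{S_{2i}}))$ leaves the concrete value set unchanged under the refined constraint; hence each sub-state remains a sound over-approximation restricted to its narrower path condition. I expect the main obstacle to lie in the \textbf{Branching} join: I must argue carefully that the \textup{ite}-based merge loses no concrete value even when the two branches assign values defined in terms of different input symbols, and that the simplification steps invoked during splitting never silently drop a feasible valuation. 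Once the per-statement invariant and the split-preservation are established, Lemma~\ref{lemma:sound1} follows by induction on the length of the loop body, since every item $(S,\bE_S)$ placed in the worklist is produced by a finite sequence of these sound operations.
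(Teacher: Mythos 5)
Your proposal covers only two of the three ways a pair $(S,\bE_S)$ can enter the worklist, and the case you omit is precisely the one where soundness is not free. In Algorithm~\ref{alg:framework}, worklist items arise from (i) plain abstract interpretation of a loop iteration, (ii) state splitting (Lines~8--14), and (iii) transition \emph{merging} at Lines~16--19: when the same pair of states acquires multiple non-equivalent transitions, their environments are merged by MR2 --- either by the inductive guess-and-check procedure or, when induction fails, by Algorithm~\ref{alg:mr2}, which converts each abstract value to an interval via SMT-based minimization/maximization and then applies the classic widening operator. Your induction over transfer functions plus your split-preservation argument handles (i) and (ii) --- essentially as the paper does, which likewise observes that the transfer functions model exact semantics and that $\textup{simplify}$ only removes values from unreachable branches --- but your closing claim that ``every item $(S,\bE_S)$ placed in the worklist is produced by a finite sequence of these sound operations'' is false as stated: a merged environment $\bE_X = \textup{merge}(\bE_{X1},\bE_{X2},\dots)$ is not obtained by any composition of transfer functions and splits.

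To close the gap you need a separate argument for case (iii): the interval conversion over-approximates the concrete value set of each abstract value (because $c_{min}$ and $c_{max}$ bound all solutions of $\tilde v$ under the path constraint), the widening operator is sound by the standard result of Cousot and Cousot~\cite{cousot1977abstract}, and the induction-based alternative is sound because the guessed parameterized value is verified by re-running the abstract interpretation (a mathematical-induction step). Note also that this case is where the lemma's wording ``over-approximates'' earns its keep: cases (i) and (ii) are exact (which the paper exploits later for the completeness half of Theorem~\ref{th:sound}), whereas merging genuinely loses precision --- e.g., widening $\textup{int}(1,3)$ and $\textup{int}(8,10)$ to $\textup{int}(1,+\infty)$ --- so soundness there cannot be argued by exactness, only by inclusion.
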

\begin{proof}
    In Algorithm~\ref{alg:copy}, the pair $(S, \bE_S)$ in the worklist may
    come from three places:
    \ding{182} the ones produced by the abstract interpretation (Line~14 if we have $S'\equiv S_i'$, meaning that we actually do not split the state);
    \ding{183} the ones produced by splitting (Line~14);
    and \ding{184} the ones produced by merging (Line~19).
    Next, we explain that in each case, any abstract value $\bI_S(v)$ in the program environment is sound.

    \ding{182}
    Figure~\ref{fig:semantics_basic} shows a standard dataflow analysis for our abstract language model, i.e., Figure~\ref{fig:design_lang}.
    The analysis models the exact semantics of each program statement.
    For instance, if the abstract values of the variables $v_1$ and $v_2$ are respectively $v_1^\sharp$ and $v_2^\sharp$,
    the result of $v_1 \oplus v_2$ will be $v_1^\sharp \oplus v_2^\sharp$.
    Hence, each inference rule in Figure~\ref{fig:semantics_basic} is sound and complete.
    Given that each inference rule is sound and complete, 
    the analysis of each loop iteration is also sound and complete.
    Therefore,
    the resulting program environment is sound and complete,
    meaning that the abstract interpretation does not introduce any over- and under-approximation into the program environment.
    
    \ding{183}
    As shown in Algorithm~\ref{alg:splitting},
    when splitting a state $S$ to multiple states $S_i$,
    we rewrite each abstract value in $\bI_S$ via a simplification procedure to build $\bI_{S_i}$.
    This simplification procedure~\cite{dillig2010small} only rewrites a formula
    by removing abstract values from unreachable paths and, thus, does not introduce any over- and under-approximation into the program environment.
    For instance, assume $\bI_{S}(v) = \textup{ite}(\tilde{v}_1, \tilde{v}_2, \tilde{v}_3)$,
    meaning that after analyzing the path set $S$, the abstract value of the variable $v$
    is either $\tilde{v}_2$ or $\tilde{v}_3$, depending on if the branching condition $\tilde{v}_1$ is true.
    If paths in the subset $S_{i}\subseteq S$ ensures $\tilde{v}_1 = \textup{true}$,
    we then rewrite the abstract value as $\bI_{S_{i}}(v) =  \tilde{v}_2$.
    
    \ding{184}
    As shown in Algorithm~\ref{alg:mr2}, 
    when merging two program environments,
    we first convert them into intervals and then use the widening operator to merge them.
    Both the conversion and widening operations have been shown to be sound but not complete in literature~\cite{cousot1977abstract,yao2021program}.
    They are not complete because it introduces over-approximation into the abstract values.
    For instance,
    we may widen two intervals $[1, 3]$ and $[8, 10]$ to $[1, +\infty]$ which includes a large number of values, e.g., 5, not in the original intervals.
\end{proof}

In the worklist algorithm,
an FSM is a set of transitions, $(S, \bE_S, S')$,
which is actually $(S, \bE_S)$ together with the the path set $S'$ analyzed in the next loop iteration.
Intuitively,
if we have state transitions $(S, \bE_S, S'_1), (S, \bE_S, S'_2), (S, \bE_S, S'_3), \dots \in \textit{FSM}$,
it means that after executing a path $s\in S$ in a loop iteration,
we will execute a path $s'\in \bigcup S'_i$ in the next loop iteration. 
Next, we discuss the soundness of $(S, \bE_S, S')$ as follows.

\begin{lemma}[Soundness of $(S, \bE_S, S')$]\label{lemma:sound2}
    If in a concrete execution, two consecutive loop iterations respectively execute two paths in the loop body, e.g., $s$ and $s'$,
    there must exist a state transition $(S, \bE_S, S') \in \textit{FSM}$ such that $s\in S$ and $s'\in S'$.
\end{lemma}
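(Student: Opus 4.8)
The plan is to prove Lemma~\ref{lemma:sound2} by induction on the structure of Algorithm~\ref{alg:copy}, tracking how the path set $S'$ associated with a source pair $(S, \bE_S)$ is computed and maintained across splitting and merging operations. The key observation is that $S'$ is produced by the \texttt{abstract\_interpretation} call at Line~6 of the algorithm, using $\bE_S$ as the precondition. By Lemma~\ref{lemma:sound1}, $\bE_S$ soundly over-approximates the concrete program state at the start of the iteration that executes $s$. The heart of the argument is to show that, whenever a concrete execution transitions from path $s$ to path $s'$ in two consecutive iterations, the abstract interpretation starting from the sound environment $\bE_S$ necessarily explores a path set $S'$ that contains $s'$.

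First I would establish the base case: for a freshly produced transition (Lines~6-7, before any splitting or merging), the transition function in Figure~\ref{fig:semantics_basic} analyzes \emph{every} feasible path in the loop body under the precondition $\bE_S$. Since $\bE_S$ is sound, any concrete path $s'$ reachable in the next iteration from a concrete state abstracted by $\bE_S$ must be feasible in the abstract semantics, hence included in $S'$. This relies on the completeness of the basic transfer rules noted in case \ding{182} of Lemma~\ref{lemma:sound1}: the branching rule keeps both branches whenever both are satisfiable and prunes a branch only when the corresponding \texttt{unsat} check holds, so no reachable concrete path is ever dropped.

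Next I would handle the inductive step, covering how splitting (Lines~8-11) and merging (Lines~15-19) preserve the containment invariant $s \in S \land s' \in S'$. For splitting, I would invoke SR1-3 as formalized in Algorithm~\ref{alg:splitting}: when a state $X$ is partitioned into $X_1, X_2, \dots$, the partition is by disjoint satisfiable skeletal constraints whose disjunction is equivalent to the original, so every concrete path formerly in $X$ lands in exactly one $X_i$, and the corresponding transitions $(X, \bE_X, Y)$ and $(Y, \bE_Y, X)$ are replaced by transitions over the $X_i$ (Lines~10-11). Thus if $s \in X$ held before splitting, $s \in X_i$ holds for the appropriate sub-state afterward, and the transition carrying $s \to s'$ is retained. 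For merging, MR1 (Lines~13-14, Figure~\ref{fig:mr1}) only identifies states representing the same path set, so containment is trivially preserved, and MR2 (Lines~15-19, Algorithm~\ref{alg:mr2}) replaces several transitions between the same pair $(X, Y)$ by a single one whose environment is obtained either by induction or by the sound widening operator; since both the union of the merged path sets is preserved and the widened environment over-approximates each merged environment (case \ding{184} of Lemma~\ref{lemma:sound1}), no concrete transition $s \to s'$ is lost.

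The main obstacle I expect is the merging step MR2 when induction is applied rather than widening. Inductive summarization collapses an unbounded family of transitions $(S, \bE_{S_k}, S')$ into a single parametric transition, and I must argue that instantiating the induction variable $k$ recovers, for \emph{every} concrete iteration count, a transition whose path set contains the actual concrete successor $s'$. This requires appealing to the guess-and-check correctness argument of Figure~\ref{fig:mr2-induction} --- that the checked inductive hypothesis $\bI_{1k}(v)$ genuinely agrees with the abstract interpretation at step $k+1$ --- and carefully relating the concrete iteration index of the execution to the instantiation of $k$. A secondary subtlety is the interaction of MR3 (Algorithm~\ref{alg:mr3}), which rewrites $\tau$-constraints and may split or merge transitions after termination; however, since MR3 is applied post hoc and the rewrites preserve the conjunction $\Phi_{S_1} \land \Phi_{S_2}$ of constraints (only reassigning which bytes are labeled $\sigma$ versus $\tau$), the set of concrete executions admitted is unchanged, so the containment invariant survives. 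Once these cases are dispatched, the lemma follows by the fixed-point structure of the worklist: the FSM returned at Line~21 is the stable set of transitions under all these invariant-preserving operations.
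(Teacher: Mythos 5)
Your proposal takes essentially the same approach as the paper's proof: invoke Lemma~\ref{lemma:sound1} to establish that $\bE_S$ soundly over-approximates the concrete state after executing $s$, conclude that abstract interpretation starting from $\bE_S$ must therefore analyze a path set $S'$ containing the concrete successor $s'$, and then argue that state splitting (Lines~9--11) preserves the containment because the transition structure is rewired onto the sub-states. If anything, you are more thorough than the paper, whose two-case proof (unsplit and split) leaves the merging operations MR1/MR2 and the post-hoc MR3 rewriting implicit, whereas you explicitly check that these also preserve the invariant $s \in S \land s' \in S'$.
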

\begin{proof}
        By Lemma~\ref{lemma:sound1}, 
        the output environment of analyzing the path set $s\in S$ is sound,
        meaning that each abstract value in $\bI_S$ over-approximates values in the concrete path $s$.
        Due to the over-approximation,
        using $\bE_S$ as the initial program environment,
        the next loop iteration must analyze a path set $S'$ that includes $s'$.
        If $S$ and $S'$ are not further split into sub-states in Algorithm~\ref{alg:copy},
        we have $(S, \bE_S, S')\in \textit{FSM}$. Hence, the lemma is proved.
        
        If $S$ and $S'$ are split into smaller sub-states, e.g., $s\in S_i$ and $s'\in S_i'$,
        Lines~9-11 in Algorithm~\ref{alg:copy} say that
        we still preserve the connections between $S_i$ and $S_i'$.
        Hence, we have $(S_i, \bE_{S_i}, S'_i)\in \textit{FSM}$. The lemma is also proved.
\end{proof}

Given that the transitions inferred by Algorithm~\ref{alg:copy} is sound, 
we discuss the soundness of the whole FSM below.

\begin{lemma}[Soundness of \textit{FSM}]\label{lemma:sound3}
    If a network message can be accepted by the loop under analysis, it can also be accepted by our inferred FSM.
\end{lemma}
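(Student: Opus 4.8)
The plan is to lift the per-transition soundness of Lemma~\ref{lemma:sound2} to an entire message by chaining transitions along a concrete execution. First I would fix a message $\sigma^n$ that is accepted by the parsing loop, meaning that running the do-while loop on $\sigma^n$ terminates through an $\textit{exit}()$ statement after a sequence of iterations that execute the loop-body paths $s_1, s_2, \dots, s_m$. The first path $s_1$ is the one analyzed from the initial environment $\bE_\textit{init}$ at Line~2 of Algorithm~\ref{alg:copy}, so $s_1 \in S_1$ with $S_1 \in \bS_0$ a start state; the last path $s_m$ exits the loop, and since SR2 isolates loop-exiting paths into a dedicated final state, $s_m \in S_m$ with $S_m \in \bF$. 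I would then invoke Lemma~\ref{lemma:sound2} on each consecutive pair $(s_i, s_{i+1})$ to obtain a transition $(S_i, \bE_{S_i}, S_{i+1}) \in \textit{FSM}$ with $s_i \in S_i$ and $s_{i+1}\in S_{i+1}$, and concatenate them into an FSM path $S_1 \to S_2 \to \cdots \to S_m$ running from a start state to a final state.

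Next I would argue that this chain actually consumes $\sigma^n$, i.e., the bytes read during iteration $i$ satisfy the constraint labeling the $i$th edge. Because $s_i$ is the path truly taken on $\sigma^n$, the concrete bytes satisfy the real path condition $\Phi_{S_i} = \phi_{S_i}[\bI_{S_i}(\kappa)/\kappa]$, which by Lemma~\ref{lemma:sound1} soundly over-approximates the values seen along $s_i$; this condition is exactly the transition constraint placed on the edge. Concatenating these satisfied constraints shows that the raw transition set produced before any merging already accepts $\sigma^n$.

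The crux of the proof, and the part I expect to be the main obstacle, is that the FSM returned by Algorithm~\ref{alg:copy} is not this raw chain but its image under the merging rules, so I must show that MR1--MR3 preserve acceptance. For MR1 this is immediate, as it only identifies equivalent states. For MR2 I would split into the two cases of Algorithm~\ref{alg:mr2}: the induction-based merge replaces a repetitive sub-chain $A_0 \to A_1 \to \cdots \to A_t$ by a single self-looping state $A_k$, and because induction is exact and loses no precision~\cite{angluin1983inductive}, instantiating the induction variable $k$ to the length actually traversed reproduces precisely the original sub-chain; the interval-domain (widening) merge only over-approximates each constraint, yielding a more permissive transition, so any byte sequence accepted before is still accepted. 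Hence acceptance is monotone under MR2 in either branch. Finally, MR3 rewrites a pair of consecutive constraints $\Phi_{S_1}, \Phi_{S_2}$ into $\Phi_{S_1}', \Phi_{S_2}'$ (or merges the two edges) while preserving the conjunction $\Phi_{S_1}' \land \Phi_{S_2}' \equiv \Phi_{S_1} \land \Phi_{S_2}$, so the set of byte sequences jointly satisfying the two edges is unchanged and the message remains accepted. Composing these three observations with the raw accepting chain established above completes the argument that $\sigma^n$ is accepted by the inferred FSM.
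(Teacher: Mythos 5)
Your proposal is correct and its backbone is the same as the paper's: fix the concrete path sequence $s_1,\dots,s_m$, invoke Lemma~\ref{lemma:sound2} on each consecutive pair to obtain transitions $(S_i,\bE_{S_i},S_{i+1})\in\textit{FSM}$, use Lemma~\ref{lemma:sound1} to get the implication $\Gamma_{s_i}\Rightarrow\Phi_{S_i}$ so that the message satisfies the conjunction of the chained transition constraints, and close with SR2 to conclude that the exit path lands in a final state. Where you diverge is in how the merging rules are discharged. The paper folds all of that upstream: soundness of merged environments (induction or widening) is case \ding{184} of Lemma~\ref{lemma:sound1}, and persistence of transitions under splitting is inside the proof of Lemma~\ref{lemma:sound2}, so the paper's proof of this lemma is a pure chaining argument that never mentions MR1--MR3. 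You instead treat the chain from Lemma~\ref{lemma:sound2} as ``raw'' and add an explicit third step showing MR1, both branches of MR2, and MR3 preserve acceptance. This is partly redundant --- Lemma~\ref{lemma:sound2} already asserts membership in the final \textit{FSM}, so MR1/MR2 need no separate treatment --- but it does buy you something the paper omits: MR3 is applied \emph{after} Algorithm~\ref{alg:copy} terminates, so the paper's lemma chain, which concerns the algorithm's output, technically never argues that the MR3 post-processing (constraint rewriting with $\Phi_{S_1}'\land\Phi_{S_2}'\equiv\Phi_{S_1}\land\Phi_{S_2}$, or edge fusion) preserves the accepted language. Your last paragraph fills that small gap, making your version slightly more complete than the paper's own proof.
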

\begin{proof}
        If the original program can accept an input message,
        then the input message will execute a sequence of paths, e.g., $(s_1, s_2, \dots, s_n)$,
        such that $s_i$ is a path in the loop body and is executed in the $i$th loop iteration, and $s_n$ is a path ending with an \textit{exit} statement.
        Assuming that the exact path constraint (i.e., path constraint without over- and under-approximation) of each path $s_i$ is $\Gamma_{s_i}$,
        we can write the exact path constraint of the whole input message as $\bigwedge_{i=1}^n\Gamma_{s_i}$.
        
        By Lemma~\ref{lemma:sound2},
        for each pair of path $(s_i, s_{i+1})$, we can find a state transition 
        $(S_i, \bE_{S_i}, S_{i+1})$ such that $s_i\in S_i$ and $s_{i+1}\in S_{i+1}$.
        By Lemma~\ref{lemma:sound1},
        $\bE_{S_i}$ is sound, meaning that the state transition from $S_i$ to $S_{i+1}$
        is constrained by a sound path constraint $\Phi_{S_i}$ such that $\Gamma_{s_i} \Rightarrow \Phi_{S_i}$.
        Therefore, $\bigwedge_{i=1}^n\Gamma_{s_i} \Rightarrow \bigwedge_{i=1}^n\Phi_{S_i}$.
        This means that the input message also satisfies $\bigwedge_{i=1}^n\Phi_{S_i}$.
        Thus, the state transitions from the state $S_1$ to the state $S_n$ can consume the whole input message.
        
        Finally, due to SR2, $S_n$ is a final state. Hence, our inferred FSM also accepts the input message.
\end{proof}

The completeness of our inferred FSMs can be discussed in a similar manner as below.

\begin{lemma}[Completeness of \textit{FSM}]\label{lemma:completeness}
    Assuming we do not use any interval domain during our analysis,
    the inferred FSM is complete --- 
    if a message can be accepted by our inferred FSM,
    it can also be accepted by the loop under analysis.
\end{lemma}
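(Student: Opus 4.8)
The plan is to mirror the three-step soundness argument (Lemmas~\ref{lemma:sound1}--\ref{lemma:sound3}) in the reverse direction, exploiting the observation already recorded in the proof of Lemma~\ref{lemma:sound1}: cases \ding{182} (abstract interpretation) and \ding{183} (splitting) introduce \emph{no} approximation at all, so they are both sound \emph{and} complete, and the only source of incompleteness is case \ding{184}, the interval/widening fallback of MR2. Hence, once we forbid the interval domain, every step of Algorithm~\ref{alg:copy} preserves exactness, and the one-way implication $\Gamma_{s_i}\Rightarrow\Phi_{S_i}$ used in Lemma~\ref{lemma:sound3} strengthens to an equivalence $\Gamma_{s_i}\equiv\Phi_{S_i}$. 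First I would state and prove a completeness analog of Lemma~\ref{lemma:sound1}: absent the interval domain, every abstract value $\bI_S(v)$ is \emph{exact}, i.e., it neither over- nor under-approximates the concrete values of $v$ after executing a path in $S$. The only new case relative to Lemma~\ref{lemma:sound1} is the induction-based branch of MR2, whose guess-and-check summarization is verified by a genuine fixed-point computation and therefore, as noted in \S\ref{sec:approach} following Angluin~\cite{angluin1983inductive}, loses no precision; for each concrete value of the induction variable $k$ the parameterized constraint instantiates back to the exact original transition constraint.

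Next I would prove a completeness analog of Lemma~\ref{lemma:sound2}: for every transition $(S,\bE_S,S')\in\textit{FSM}$ and every input slice satisfying $\Phi_S$, there is a concrete path $s\in S$ that the loop actually executes on that slice and whose successor iteration executes some $s'\in S'$. This is exactly the converse of Lemma~\ref{lemma:sound2}, and it follows because, by the exact version of Lemma~\ref{lemma:sound1}, $\Phi_S$ coincides with the union of the exact path constraints $\Gamma_s$ for $s\in S$, so no spurious input satisfies $\Phi_S$. I would then lift this to the whole message as in Lemma~\ref{lemma:sound3}, but running the implication backwards. Suppose a message $M=b_1\cdots b_L$ is accepted by the FSM via a start-to-final sequence $S_1\to S_2\to\cdots\to S_n$ whose constraints $\Phi_{S_1},\dots,\Phi_{S_{n-1}}$ are satisfied by the consecutive slices of $M$. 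Using transition-level completeness together with $\Gamma_{s_i}\equiv\Phi_{S_i}$, I would build, by induction on $i$, a concrete path sequence $(s_1,\dots,s_n)$ with $s_i\in S_i$ that the loop executes on $M$; since $S_n$ is a final state it contains, by SR2, an \textit{exit} path, so the loop terminates having consumed all of $M$, i.e., it accepts $M$.

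The hard part will be the two rules that rewrite constraints after the core fixed point, namely MR2's induction and MR3. For MR3 the argument is that Algorithm~\ref{alg:mr3} performs only equivalence-preserving rewrites --- it either moves a $\tau$-subformula to the preceding transition under substitutions that merely realign the $\sigma$/$\tau$ indices (so $\Phi_A'\land\Phi_B'\equiv\Phi_A\land\Phi_B$), splits a state to isolate a disjunct, or fuses two transitions into $\delta(A,\Phi_A\land\Phi_B)=\{C\}$ --- none of which enlarges or shrinks the accepted language. For MR2's induction the obstacle is to show that accepting the summarized transition with a particular witness for $k$ yields a genuine concrete execution: I would argue that instantiating $k$ to the length of the matched repetitive segment recovers exactly the un-summarized transition $(A_k,\bE_{A_k},\cdot)$, which by the exact Lemma~\ref{lemma:sound1} is realizable, so the concrete loop can replay the corresponding $k$ iterations. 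Combining these observations, no message accepted by the FSM lies outside the loop's language, which establishes completeness.
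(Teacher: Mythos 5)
Your proposal is correct and follows essentially the same route as the paper's own proof: both rest on the observation (from the proof of Lemma~\ref{lemma:sound1}) that only the interval/widening case introduces over-approximation, so that without it every inferred path constraint is exact ($\Phi_s \Leftrightarrow \Gamma_s$), and then the soundness argument of Lemma~\ref{lemma:sound3} is run in reverse, with SR2 supplying the loop-exiting path at the final state. Your treatment is in fact slightly more careful than the paper's --- you explicitly discharge the induction branch of MR2 and the post-hoc MR3 rewrites, which the paper leaves implicit --- and your one notational slip ($\Gamma_{s_i}\equiv\Phi_{S_i}$ should be the path-level equivalence $\Gamma_{s_i}\equiv\Phi_{s_i}$, with $\Phi_{S_i}$ the disjunction over $S_i$) is harmless since your transition-level argument picks the satisfied disjunct exactly as the paper does.
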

\begin{proof}
    As discussed in the proof of Lemma~\ref{lemma:sound1},
    we only introduce over-approximation into the program environment in the third case when the interval domain is used.
    Hence, $(S ,\bE_S)$ is complete if the interval domain is never used.
    In this case, each state transition in the FSM, i.e., $(S ,\bE_S, S')$, is constrained by the exact path constraint.
    That is, we have
    $\forall s\in S, \Phi_{s} \Leftrightarrow \Gamma_{s}$ where $\Phi_{s}$ and $\Gamma_{s}$ respectively denote the inferred and the exact path constraints of the path $s$.
    The transition constraint is denoted by $\Phi_{S} = \bigvee_{s\in S} \Phi_{s}$.
    
    If our inferred FSM can accept a message,
    then there is a sequence of state transitions $(S_1, S_2, \dots, S_n)$ that can consume the message.
    That is, the message satisfies $\bigwedge \Phi_{S_i}$, i.e.,
    $$
    \bigvee_{s_1\in S_1} \Phi_{s_1} \land \bigvee_{s_2\in S_2} \Phi_{s_2} \land \bigvee_{s_3\in S_3} \Phi_{s_3} \land \dots \land \bigvee_{s_n\in S_n} \Phi_{s_n}.
    $$
    
    We can then pick one path $s_i$ from each path set $S_i$ such that 
    the network message satisfies $\bigwedge \Phi_{s_i}$.
    Since $\Phi_{s} \Leftrightarrow \Gamma_{s}$ as discussed before,
    the network message also satisfies $\bigwedge \Gamma_{s_i}$.
    This means
    the loop under analysis can consume the network message via the path sequence $(s_1, s_2, \dots, s_n)$.
    
    Finally, since $S_n$ is a final state, SR2 ensures that $s_n\in S_n$ ends with a loop-exiting statement, meaning that
    the loop under analysis accepts the network message.
\end{proof}

\section{\revise{Computing the Precision and Recall of FSM}}\label{app:compare}

In order to identify correct or incorrect transitions in an inferred FSM, which is necessary for calculating the precision and recall,
we cannot directly use a graph differentiating algorithm to compare the ground-truth FSM with the inferred FSM, due to the following reasons. 
First, multiple FSMs, whether equivalent or not, may be represented in completely different graph structures.
Thus, given a transition in the inferred FSM, it could be hard to find its correct counterpart in the ground truth, thereby being hard to determine the correctness of the transition.
Second, it makes little sense to discuss the correctness of a single transition,
because a state transition could be correct for parsing one message but incorrect for another.
For instance, assume the FSM in Figure~\ref{fig:b2}(a) is the ground truth, which supports two message types: the first can be parsed from $A$ to $D$ and the second from $A$ to $D'$.
Figure~\ref{fig:b2}(b) is an inferred FSM, where the transition from $C$ to $D$ is correct for the first message type but incorrect for the second.
Thus, when computing the precision and recall,
we need to consider the correctness of a state transition in the context of a full path from the start state to the final state. 
In what follows, we outline an approach to estimating the precision and recall of an inferred FSM.

\begin{figure}[h]
    \centering
    \includegraphics[width=0.75\linewidth]{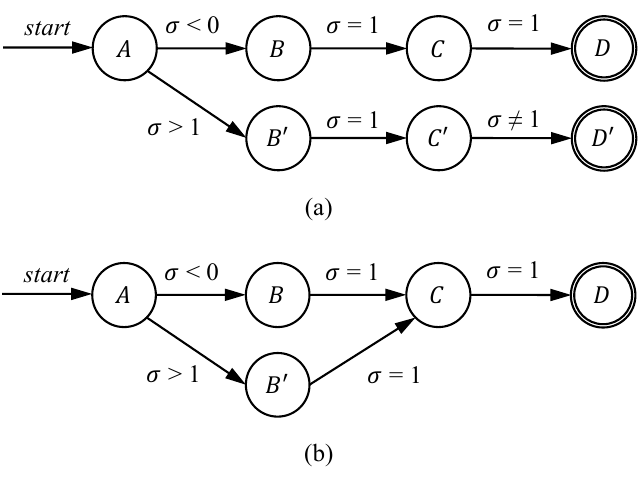}
    \vspace{-4mm}
    \caption{Correctness of a single transition.}\label{fig:b2}
    \vspace{-2mm}
\end{figure}

\noindent
\textbf{Step 1: Extracting Formats from Official Documents.}
For each protocol, we manually build a formal format as a ce-regex based on its official document.
This format serves as the ground truth in our evaluation.
Note that manually building a format based on the official document
is a common practice and frequently used in almost all literature on protocol reverse engineering, such as Tupni~\cite{lin2008automatic} and AutoFormat~\cite{cui2008tupni}.

For instance, for the protocol Mavlink, a snippet of the manually-built format is as below.
\begin{center}\small
    \texttt{STX(1) $\cdots$ Sys-ID(1) Comp-ID(1) Msg-ID(3) $\cdots$}
\end{center}
It indicates that a Mavlink message is a byte sequence that can be split into multiple fields,
including \texttt{STX}, \texttt{Sys-ID}, \texttt{Comp-ID}, and \texttt{Msg-ID}.
The first three fields are one-byte integers and the field \texttt{Msg-ID} is a three-byte integer.
As a ce-regex, the manually built format also includes constraints like \texttt{STX = 0xFD},
which says that the first field must be a constant 0xFD.
It is direct to transform the ce-regex to an FSM.

\smallskip
\noindent
\textbf{Step 2: Normalizing an FSM.}
We normalize the FSMs so that we can evaluate the correctness of state transitions at a fine-grained level.
If a state transition from the state $A$ to the state $B$, e.g., $\delta(A, \alpha\lor \beta) = \{B\}$, is constrained by a disjunctive constraint, e.g., $\alpha\lor \beta$,
we split it into two state transitions, i.e., $\delta(A, \alpha) = \{B\}$ and $\delta(A, \beta) = \{B\}$,
which are respectively constrained by $\alpha$ and $\beta$.
Assume the inferred constraint $\alpha$ is incorrect but $\beta$ is correct.
Before normalization, since $\alpha$ is incorrect, the constraint $\alpha\lor \beta$ is regarded to be incorrect.
As a result, the state transition is also considered incorrect.
After normalization,
we can evaluate the state transition at a fine-grained level.
That is, the transition with the constraint $\alpha$ is incorrect but the one with $\beta$ is considered correct.
This normalization rule is illustrated in Figure~\ref{fig:norm}(a).

\begin{figure}[h]
    \centering
    \includegraphics[width=0.8\linewidth]{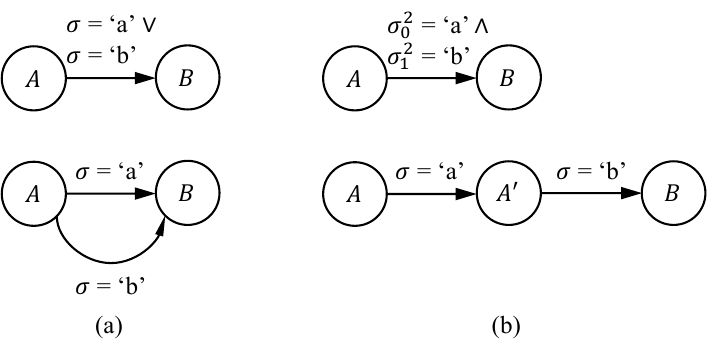}
    \vspace{-4mm}
    \caption{Normalization.}\label{fig:norm}
\end{figure}

Similarly, if a state transition $\delta(A, \alpha\land \beta) = \{B\}$ is constrained by
a conjunctive constraint where the constraints $\alpha$ and $\beta$ respectively constrain two independent inputs, e.g.,
$\sigma^{n}_{0..i}$ and $\sigma^{n}_{i+1..n-1}$,
we split the transition into two consecutive transitions, $\delta(A, \alpha) = \{A'\}$ and $\delta(A', \beta) = \{B\}$.
This normalization rule is illustrated in Figure~\ref{fig:norm}(b).

\smallskip
\noindent
\textbf{Step 3: Computing Precision and Recall.}
As discussed before,
we should consider the correctness of a transition in the context of a full path in an FSM.
However, due to path explosion, sometimes,
we cannot enumerate all paths in an FSM.
Instead, we enumerate all paths of length 1, 2, 3, $\cdots$, in the ground-truth FSM, until either we get 1 million paths (we believe the number of paths is sufficiently large) or we have enumerated all paths in the FSM.
We record the path set as $P$.

For each path $p\in P$ in the ground truth, to find its corresponding path $p'$ in the inferred FSM, we generate a message by solving its path constraint and use the inferred FSM to parse the message.
We then compare the two paths $p$ and $p'$.
A transition in $p'$ is correct iff it has the same constraint as $p$.
The number of correct transitions is denoted as $T(p')$ and incorrect transitions $F(p')$.
We then respectively compute the precision and recall of the inferred FSM as follows.
$$
\textit{Precision} = \frac{\Sigma_{p\in P} T(p')}{\Sigma_{p\in P} T(p') + F(p')};
\enskip\enskip\enskip
\textit{Recall} = \frac{\Sigma_{p\in P} T(p')}{\Sigma_{p\in P} T(p)}
$$

\noindent
Intuitively,
the precision is the ratio of correct state transitions to all inferred transitions;
and the recall is the ratio of correct state transitions to all transitions in the ground truth.

\dismissreviseblock

\mstartreviseblock

\section{Example of Detected Bugs}\label{app:bugs}

\mremark{The example is originally put in Section 6. We move it to the appendix to save space.}
Figure~\ref{fig:bug} shows a global buffer overflow detected in SML.
The parsing loop calls the function \textit{smlState} with an input byte
and updates the state according to the byte.
In each parsing iteration, it pushes at most two bytes into the global buffer \textit{listBuffer}, of which the maximum length is 80.
This means that to trigger the bug in the function \textit{smlOBISManufacturer}, a message has to pass at least 40 iterations of the parsing loop.
In other words, the bug-triggering message must pass at least 40 state transitions in an inferred FSM.
This requires the inferred FSM to be of high precision and recall. Otherwise, ill-formed messages will be generated,
which are easy to be pruned and cannot execute deep program paths.
As discussed in \S\ref{sec:motivating_example}, conventional approaches cannot handle FSM parsing loops well. Thus, fuzzers armed with them miss this bug.

\begin{figure}[h]
    \centering
    \includegraphics[width=0.95\columnwidth]{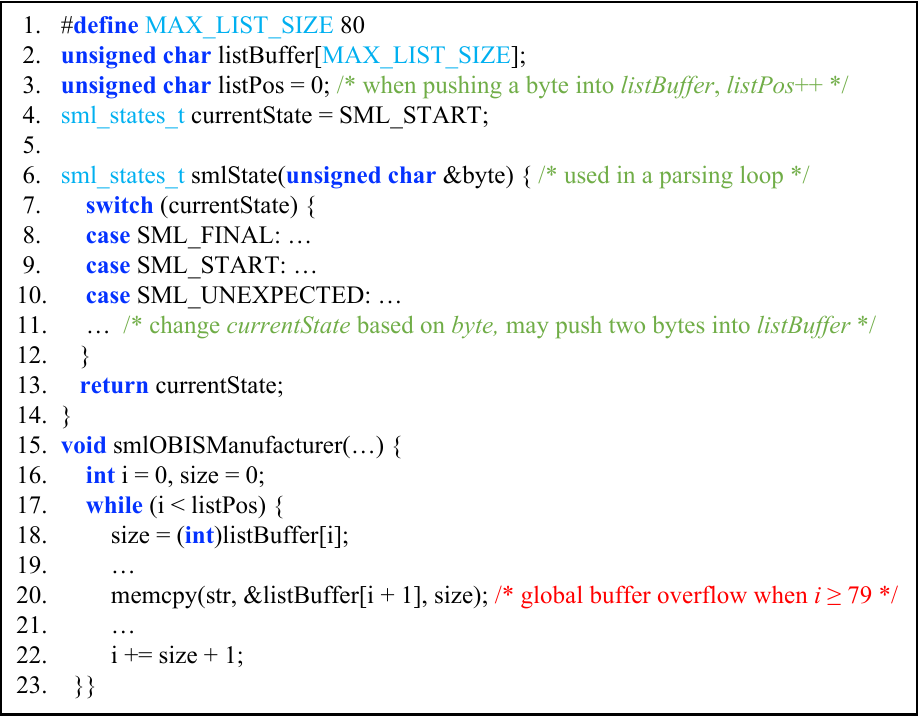}
    \vspace{-2mm}
    \caption{An example of detected vulnerabilities.}\label{fig:bug}
    \vspace{-4mm}
\end{figure}

\mdismissreviseblock

\end{document}